\documentclass{article} 

\usepackage{amsmath}
\usepackage{amssymb}

\usepackage{latexsym} 
\usepackage{amsfonts}

\usepackage{amsthm}

\usepackage{graphicx}


\textwidth=6.5in \textheight=8.5in \oddsidemargin=0in

\newcounter{NN}
\setcounter{NN}{0}

\newcommand{\beq}{\begin{equation}}  
\newcommand{\eeq}{\end{equation}}   
\newcommand{\bear}{\begin{array}}  
\newcommand{\eear}{\end{array}}

\newtheorem{example}[NN]{Example}

\newtheorem{remar}[NN]{Remark} 
\newtheorem{propositio}[NN]{Proposition} 
\newtheorem{theore}[NN]{Theorem} 
\newtheorem{lemm}[NN]{Lemma} 

\newcommand\dd{\mathrm{d}}

\def\Z{\mathbb{Z}}

\def\Z{\mathbb{Z}}

\def\t#1{\widetilde{#1}}
\def\h#1{\widehat{#1}}



\title{Integrability of reductions of the  discrete KdV and potential KdV equations}
\author{A.N.W. Hone \\
School of Mathematics, Statistics \& Actuarial Science\\ 
University of Kent, Canterbury, U.K.
\and 
P.H. van der Kamp \& G.R.W. Quispel \\ 
Department of Mathematics and Statistics\\ 
La Trobe University, Melbourne, Australia.\\
\and   
D.T. Tran \\
School of Mathematics and Statistics\\
University of New South Wales, Sydney, Australia.
} 

\begin{document}
\maketitle





\abstract{We study the integrability of mappings obtained as reductions of the discrete   Korteweg-de Vries (KdV) equation
and of two copies of the discrete potential Korteweg-de Vries equation (pKdV). %
We show that 
the mappings corresponding to the discrete KdV equation, which can be derived from the 
latter,  are completely integrable in the  Liouville-Arnold sense. 
The mappings associated with two copies of the pKdV
equation are also shown to be 
integrable. 
} 


%
%

\section{Introduction} 

The problem of integrating differential equations goes back to the origins of calculus and 
its application to problems in classical mechanics.  In the nineteenth century, the notion of 
complete integrability was provided with a solid theoretical foundation by Liouville, whose 
theorem gave sufficient conditions for a Hamiltonian system to be integrated by quadratures; 
yet only a few examples of integrable mechanical systems (mostly with a small number of degrees of 
freedom) were known at the time.   
Poincar\'e's subsequent results on the non-integrability of the three-body problem seemed to indicate that 
many, if not most, systems 
should 
be non-integrable.  
Nevertheless, examples of integrable 
systems (and action-angle variables in particular) played an important role in the early development of quantum theory.


The theory of integrable systems only began to expand rapidly in the latter part of the twentieth 
century, with the discovery of the remarkable properties of the   Korteweg-de Vries (KdV) equation, together with 
a host of other 
nonlinear partial differential equations that were found to be amenable to the inverse scattering technique. 
As well as having exact pulse-like solutions (solitons) which undergo elastic collisions, such equations 
could be interpreted as infinite-dimensional Hamiltonian systems, with an infinite number of conserved quantities. 
Moreover, it was shown that these equations admit particular reductions (e.g. to stationary solutions, or 
to travelling waves) which can be viewed as integrable mechanical systems with finitely many degrees of freedom.  
The papers in the collection \cite{zakharov} provide a concise and self-contained survey of the theory of integrable ordinary 
and partial differential equations;  for a more recent set of review articles, see \cite{mik}. 

In the last two decades or so, there has been a more gradual development of 
{\it discrete} 
integrable systems, in the form of 
finite-dimensional maps \cite{qrt1, CapelNijhoff1991} and 
discrete Painlev\'e equations \cite{rg}, 
as well as partial difference equations 
defined on lattices or quad-graphs \cite{Nijhoff1995KdV, ABS}. Discrete integrable systems 
can be obtained directly by seeking discrete analogues of particular continuous soliton equations or 
Hamiltonian flows \cite{suris}, but they also appear independently in solvable 
models  of statistical mechanics (see 
the link with the hard hexagon model in \cite{qrt1}, for example)   
or quantum field theory (see \cite{zabrodin}, for instance). 
An important  theoretical result  for ordinary difference equations or maps 
is the fact  that the Liouville-Arnold definition of integrability for 
Hamiltonian systems of ordinary differential equations %
can be extended 
naturally to symplectic maps \cite{maeda, Bruschi1991, veselov}, so that an appropriate 
modification of Liouville's theorem holds.  For lattice equations with two or three independent variables 
there is 
less theory available (especially from the Hamiltonian point of view), and the full details 
of the known integrable examples are still being explored, but one way to gain understanding 
is through the analysis of particular families of reductions. 

By imposing a periodicity
condition, integrable lattice  equations can be reduced to ordinary difference equations (or mappings/maps) 
\cite{CapelNijhoff1991, QCPN1991, Kamp2009Initialvalue,  KampStaircase}.
It is believed that the reduced maps
obtained from an {\em  integrable} lattice equation are completely  integrable in the Liouville-Arnold sense. 
To {\em prove}  that a map is integrable  one needs to find
a Poisson structure together with  a sufficient number of functionally independent first integrals, 
and then show that these integrals
commute with respect to the Poisson bracket. 
One complication that immediately arises is that the reduced maps naturally come in families
of increasing dimension, and the number of first  integrals grows with the dimension. 
The complete integrability of some particular KdV-type maps was proved  
in \cite{CapelNijhoff1991}, and 
progress has been made recently with other families of maps.
For maps obtained as 
reductions of the equations in the Adler-Bobenko-Suris  (ABS) 
classification \cite{ABS}, 
and for reductions of the sine-Gordon and modified Korteweg-de Vries (mKdV) equations, first integrals 
were given in closed form by using the staircase method and the noncommutative Vieta expansion \cite{Tranclosedform, Kampclosedform}.
In particular, the complete integrability of mappings obtained as reductions of the discrete sine-Gordon, mKdV and potential KdV (pKdV)   
equations was studied in detail in \cite{TranPhD, TranInvo}.

Given a map, the question arises as to whether  
it has a Poisson structure, and if so,  how can one  find it?
In general, 
the  answer is not known. 
However, for some classes of maps one can assume that in coordinates $x_j$ 
the Poisson structure is in canonical or log-canonical form, 
i.e. 
the Poisson brackets  have the form $\{x_i,x_j\}=\Omega_{ij}$  or $\{x_i,x_j\}=\Omega_{ij}\, x_ix_j$ respectively,
where $\Omega$ is a constant skew-symmetric  matrix.
This approach  is effective 
for mappings obtained in the context of cluster algebras 
\cite{gsvduke, FordyHoneSIGMA, fordyhonecluster}, and also applies to  
reductions of the 
lattice pKdV, sine-Gordon and mKdV equations \cite{TranInvo}.
Another approach  requires the existence of a  Lagrangian for the reduced map: 
by using a discrete analogue of the Ostrogradsky transformation, as introduced in  \cite{Bruschi1991},
one can rewrite the map in  canonical coordinates; from there one can derive a Poisson structure in the original variables.

Here we start by considering a well known integrable lattice equation,  
namely the discrete potential Korteweg--de Vries 
 equation, 
also referred to as  $H_1$ in the  ABS list  \cite{ABS}, which is given by 
\begin{equation}
\label{E:pkdv} 
(u_{\ell , m} -u_{\ell +1,m+1})(u_{\ell +1,m}-u_{\ell ,m+1})=1, 
\end{equation}
where $(\ell , m)\in\mathbb{Z}^2$.  
Early results on this equation appear in  
\cite{qncl, wiersmacapel, CapelNijhoff1991} and \cite{Nijhoff1995KdV},   
where amongst other things it was shown that 
(\ref{E:pkdv}) leads to the 
continuous potential  KdV   equation, that is 
\beq\label{cpkdv} 
\frac{\partial u}{\partial t}  =\frac{\partial^3 u}{\partial x^3}+3\left(\frac{\partial u}{\partial x}\right)^2, 
\eeq
by  performing
a suitable continuum limit. In \cite{CapelNijhoff1991} 
a  
Lagrangian was 
obtained for the discrete pKdV equation (\ref{E:pkdv}); 
this can be explained 
using the so-called three-leg form, as in \cite{ABS}.  
However, the associated Euler-Lagrange equation turns out to 
consist of two copies of the lattice pKdV equation, 
that is 
\beq \label{double} 
 u_{\ell +1,m}+u_{\ell -1,m}+ \frac{1}{u_{\ell -1 ,m-1}-u_{\ell , m}} =
u_{\ell , m+1}+u_{\ell ,m-1} +\frac{1}{u_{\ell ,m}-u_{\ell +1 , m+1}} .  
\eeq  
The latter equation, which henceforth we refer to as 
the {\it double pKdV equation},  
is somewhat more general than (\ref{E:pkdv}): every solution 
of (\ref{E:pkdv}) is a solution of (\ref{double}), but the converse statement does not hold. 
In this paper we shall be concerned with the 
double pKdV equation 
(\ref{double}), rather than with  
(\ref{E:pkdv}). 

Next, we introduce a  new variable on the lattice,  $v_{\ell , m}:=u_{\ell,m}-u_{\ell+1,m+1}$,   
and immediately find that, whenever $u_{\ell , m}$ is a solution of (\ref{double}), 
$v_{\ell , m}$ satisfies 
\beq \label{E:kdv}
 v_{\ell +1 ,m}-v_{\ell ,m+1} = \frac{1}{v_{\ell ,m}}-\frac{1}{v_{\ell +1 ,m+1}} 
. 
\eeq 
The latter equation  is known as the lattice KdV equation. 
Both (\ref{E:pkdv}) and (\ref{E:kdv}) are 
integrable lattice 
equations, 
in the sense that they can be derived as the compatibility condition for 
an associated linear system, known as a Lax pair; this is discussed in section 4.

In this paper we perform the 
so-called $(d-1,-1)$-reduction 
on the 
discrete pKdV  
Lagrangian 
and derive the corresponding %
reduction of the double pKdV  
equation  (\ref{double}).  This means that we consider functions 
$u=u_{\ell , m}$ on the lattice   
which have the following periodicity property under shifts: 
\beq \label{N1} 
u_{\ell , m} =u_{\ell + d-1 , m-1}. 
\eeq 
Such periodicity implies that $u$ depends on the lattice variables $\ell$ and $m$ through the combination 
$n=\ell+m(d-1)$ only; thus, with a slight abuse of notation, we write $u=u_n$. 
Such a reduction can be understood as a discrete analogue of the travelling wave reduction of a 
partial differential equation: for a function $u(x,t)$ satisfying a suitable partial differential equation such as (\ref{cpkdv}), 
one considers solutions that are invariant under $x\to x+c\delta $, $t\to t+\delta$ for all $\delta$; such 
solutions depend on $x,t$ through the combination $z=x-ct$ only, corresponding to waves 
travelling with constant speed $c$. 
By analogy with the continuous case, where one obtains ordinary differential equations (with independent 
variable $z$) for the travelling waves, it is apparent that imposing the  condition (\ref{N1})  in the discrete setting
leads to ordinary difference equations (with independent 
variable $n$). 
Note that in the continuous case this reduction yields 
a one-parameter family of ordinary differential equations of fixed order  
(with parameter $c$), 
whereas in the discrete case one finds 
a family of ordinary difference equations 
whose order depends on $d$.    

Here we are concerned with the  complete integrability
of the ordinary difference equation obtained as the $(d-1,-1)$-reduction of 
the double pKdV equation (\ref{double}), 
which is equivalent to a birational map in dimension $2d$, 
and the associated reduction of the 
lattice KdV equation (\ref{E:kdv}), which gives a map in dimension $d$. 
We begin by finding Poisson brackets 
for  reductions of the double pKdV equation, 
which are then used to infer brackets for the corresponding maps 
obtained from lattice KdV; having found first integrals and 
proved Liouville integrability for the KdV maps, we are subsequently able 
to do the same for the double pKdV maps.

The paper is organized as follows. In section 2, we start with 
the discrete Lagrangian whose 
Euler-Lagrange equation is the double pKdV 
equation (\ref{double}). 
For each $d$, we then derive a symplectic structure for the 
double pKdV map 
obtained as the 
$(d-1,-1)$-reduction of this discrete Euler-Lagrange equation, by using 
a discrete analogue of the Ostrogradsky
transformation. 
This provides a nondegenerate Poisson bracket for each of these maps. 
In section 3 we present a  Poisson structure for the 
associated map obtained 
as the $(d-1,-1)$-reduction of the lattice 
KdV equation (\ref{E:kdv}), 
which is induced from the bracket found in section 2. 
The  Hirota bilinear form of each of the KdV maps is also given, 
in terms of tau-functions, 
whence (via a link with cluster algebras) we derive a second Poisson structure that is compatible with the 
first. In section 4 we present closed-form expressions for 
first integrals of 
each reduced KdV map, 
and we show that they are in involution with respect 
to the first of the Poisson structures. 
This furnishes a direct proof of 
Liouville integrability for these KdV maps, 
within the framework of the papers \cite{Tranclosedform} 
and \cite{TranInvo}. 
Another proof,  
based on the pair of compatible Poisson brackets, is also sketched.  
In section 5 we return to the double pKdV maps, 
and present first integrals for each of them.  
Some of these integrals are derived from the integrals of 
the corresponding KdV map in the previous section, 
while additional commuting integrals 
are found using a function  periodic with period $d$; 
this is a $d$-integral, in the sense of \cite{haggar}. The paper ends with some brief conclusions, 
and some additional comments are relegated to an Appendix. 


\section{
Poisson 
structure 
for the double pKdV maps}

Henceforth, it is convenient to use $\t{ \ }$   and  $\h{\ }$ to denote  shifts in $\ell$ and $m$ directions respectively, 
so that 
$\widetilde{u}=u_{\ell+1,m}$, $\widehat{u}=u_{\ell,m+1}$, etc. 
It is known that 
all equations  of the form 
\beq\label{abs} 
Q(u,\t{u},\h{u},\h{\t{u}},\alpha,\beta)=0 
\eeq 
in the ABS list \cite{ABS} 
are equivalent (up to some transformations) to the existence of an equation of so-called three-leg type, 
that is  
\begin{equation}
\label{E:3leg_form}
P(u,\t{u},\h{u},\h{\t{u}};\alpha,\beta)\equiv\phi(u,\h{u},\alpha)-\phi(u,\t{u}, \beta)-\psi(u,\h{\t{u}},\alpha,\beta)=0,
\end{equation}
for suitable functions $\phi, \psi$, where $\alpha$ and $\beta$ are parameters; the latter 
leads to the derivation of a Lagrangian for each of the equations (\ref{abs}). 
In particular, for the pKdV equation (\ref{E:pkdv}), which is a (parameter-free) equation 
of the form (\ref{abs}),  
we have $\phi(u,\t{u})=u+\t{u}$ and $\psi(u,\h{\t{u}})=1/(u-\h{\t{u}})$, 
and using the three-leg form (\ref{E:3leg_form}) leads to the Lagrangian
\begin{equation}
\label{E:Lagr}
\mathcal{L}=\frac{1}{2}(u+\t{u})^2-\frac{1}{2}(u+\h{u})^2-\log |u-\h{\t{u}}|.
\end{equation}
The corresponding discrete Euler-Lagrange equation  is the 
double pKdV 
equation (\ref{double}), which can be rewritten as 
\beq\label{doublej}
\h{\widetilde{\mathrm{J}}}
-
\mathrm{J} 
=0 
,  
\qquad \mathrm{with} \quad  
\mathrm{J} =  
\t{u} -\h{u}-\frac{1}{ (u- \h{\t{u}}) } .  
\eeq 
The latter equation is more general than (\ref{E:pkdv}), which arises in the special case that J 
is identically zero\footnote{However, 
every solution of (\ref{double}) can be written as $u=U+a$, where 
$U$ is a solution of  (\ref{E:pkdv}) and $a$ is a solution of the 
linear equation $\widehat{\widetilde{a}}=a$.  
See the Appendix for more details. 
}.

Now 
setting $n=\ell+m(d-1)$, with $u$ satisfying (\ref{N1}), 
the $(d-1,-1)$-reduction applied to the Lagrangian 
in (\ref{E:Lagr}) gives  $\mathcal{L} = \mathcal {L}\left( u_n,u_{n+1},\ldots,u_{n+d}\right)$, where   
\begin{equation}
\label{E:ReLagr}
\mathcal{L}
=\frac{1}{2}(u_n+u_{n+1})^2-\frac{1}{2}(u_n+u_{n+d-1})^2-\log |u_n-u_{n+d}|.
\end{equation}
The discrete action functional is
$ \mathrm{S}:=\sum_{n\in\Z} \mathcal{L}\left(u_n,u_{n+1},\ldots,u_{n+d}\right)$.
It yields the discrete Euler-Lagrange equation 
\begin{equation}
\label{E:Eu-LagrEq.}
\frac{\delta \mathrm{S}}{\delta u_n}=\sum_{r=0}^{d}\frac{\partial \mathcal{L}\left(u_{n-r},u_{n+1-r},\ldots,u_{n+d-r}\right)}{\partial u_n}
=\sum_{r=0}^{d} \mathcal{E}^{-r}\mathcal{L}_r=0,
\end{equation}
where $\mathcal{E}$ denotes the shift operator and
$\mathcal{L}_r=\frac{\partial \mathcal{L}\left(u_n,u_{n+1},\ldots,u_{n+d}\right)}{\partial u_{n+r}}$.
Thus 
we obtain the ordinary difference equation
\begin{equation}
\label{E:2copies}
u_{n+1}-u_{n+d-1}+u_{n-1}-u_{n-d+1}-\frac{1}{u_n-u_{n+d}}+\frac{1}{u_{n-d}-u_n}=0, 
\end{equation}
which is precisely the $(d-1,-1)$-reduction of (\ref{double}). 
The solutions of this equation are equivalent to the iterates of the $2d$-dimensional map
\begin{equation}
\label{E:pKdVmap}
(u_{n-d},u_{n-d+1},\ldots, u_{n+d-1})\mapsto (u_{n-d+1},u_{n-d+2},\ldots, u_{n+d}),
\end{equation}
where 
$u_{n+d}$ 
is found from equation~\eqref{E:2copies}. 

Given a Lagrangian of first order for a classical mechanical system, the Legendre transformation produces canonical symplectic coordinates on the phase space; 
the Ostrogradsky transformation is the analogue of this for Lagrangians of higher order \cite{blaszak}. 
In order to derive a nondegenerate Poisson bracket for the $2d$-dimensional map, 
we  use a discrete analogue of the Ostrogradsky transformation, as  given in \cite{Bruschi1991}, 
which is a change of variables to canonical coordinates,   
$\left(u_{n-d},u_{n-d+1}, \ldots, u_{n+d-1}\right) 
\to 
 (q_1,\ldots, q_d,p_1,\ldots,p_d)$, 
 where
$
q_i =u_{n+i-1}
$, 
$
p_i =\mathcal{E}^{-1}\sum_{r=0}^{d-i}\mathcal{E}^{-r}\mathcal{L}_{r+i}$.  
Thus, from (\ref{E:ReLagr}), we obtain
\begin{align*} 
q_i &=u_{n+i-1}, \quad  i=1, \ldots ,  d; \\
p_1&=-u_{n+1}+u_{n+d-1}+\frac{1}{u_n-u_{n+d}}=u_{n-1}-u_{n+1-d}+\frac{1}{u_{n-d}-u_n};\\
p_i&=-u_{n+i-d}-u_{n+i-1}+\frac{1}{u_{n-d+i-1}-u_{n+i-1}},\quad  i=2,\ldots ,d-1; \\
p_d&=\frac{1}{u_{n-1}-u_{n+d-1}}.
\end{align*}
In terms of the canonical coordinates $(q_j,p_j)$ the map (\ref{E:2copies}) is rewritten as 
$$ \bear{lcllcl} 
q_i & \mapsto & q_{i+1},\quad i=1,\ldots, d-1; \qquad 
& 
q_d & \mapsto & q_1- 
(p_1+q_2-q_d)^{-1}; \\
p_1 & \mapsto & p_2+q_2+q_1; \qquad 
& 
p_i & \mapsto & p_{i+1}, \quad i=2,\ldots, d-2;  \\ 
p_{d-1} & \mapsto & p_{d}-q_1-q_{d} ; \qquad  
&
p_d & \mapsto & p_1+q_2-q_d.
\eear 
$$ 
By the general results in \cite{Bruschi1991}, this map is symplectic with respect to the canonical symplectic form 
$
\sum_{j=1}^d \dd p_j\wedge \dd q_j$. Equivalently, it preserves the canonical Poisson brackets 
$\{p_i,p_j\}=\{q_i,q_j\}=0$,
$\{p_i,q_j\}=\delta_{ij}$. 

In order to find the Poisson brackets for the coordinates $u_n$, we write them 
in terms of $(q_j,p_j)$.
For all $0\leq i\leq d-1$, we have $u_{n+i}=q_{i+1}$, and
$$ 
\bear{rcl}
u_{n-1} &=& q_d+1/p_d=[q_d; p_d],\\
u_{n-i-1} & = &[q_{d-i};p_{d-i}+q_{d-i}+u_{n-i}] \qquad (\mathrm{for}\,  \, 1\leq i\leq d-2 ), \\
u_{n-d}& = &[q_1; p_1+u_{n+1-d}-u_{n-1}] , \\
\eear
$$
which means that for $1\leq i\leq d-2$, $u_{n-i-1}$ is found recursively as 
$$ 
[q_{d-i}; p_{d-i}+q_{d-i}+q_{d-i+1},p_{d-i+1}+q_{d-i+1}+q_{d-i+2},\ldots, p_{d-1}+q_{d-1}+q_d, p_d], 
$$ 
while 
$u_{n-d}=[q_1; p_1-q_d+q_2-\frac{1}{p_d},p_2+q_2+q_3,\ldots,p_{d-1}+q_{d-1}+q_d, p_d ]$,
where $[\,\, ;\, ]$ denotes a continued fraction. This yields the following result. 
\begin{theore}\label{ubracket}
The $2d$-dimensional map  given by (\ref{E:pKdVmap}) with (\ref{E:2copies}) is a Poisson map with respect to the 
nondegenerate bracket given for $0\leq j<d-1$ by 
\beq
\bear{rcl}  \label{E:Poi1}
\{u_{n-d},u_{n-j-1}\} & = & 0,    
\\
\{u_{n-d},u_{n+j}\} & = & (-1)^{j+1}(u_{n-d}-u_n)^2\ldots (u_{n-d+j}-u_{n+j})^2, 
\\
\{u_{n-d},u_{n+d-1}\}& =& (-1)^d(u_{n-d}-u_n)^2\ldots (u_{n-1}-u_{n+d-1})^2 \\  
&& -(u_{n-d}-u_n)^2(u_{n-1}-u_{n+d-1})^2. 
\eear 
\eeq 
\end{theore}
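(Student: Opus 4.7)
My strategy is to compute each of the three brackets by applying the canonical Poisson bracket
\[
\{f,g\}=\sum_i \bigl(\partial_{p_i}\!f\,\partial_{q_i}\!g - \partial_{q_i}\!f\,\partial_{p_i}\!g\bigr)
\]
directly to the explicit expressions for the $u$'s in the Ostrogradsky coordinates. The ingredients are $u_{n+j}=q_{j+1}$ for $0\le j\le d-1$; the base case $u_{n-1}=q_d+1/p_d$; the recursion $u_{n-i-1}=q_{d-i}+1/(p_{d-i}+q_{d-i}+u_{n-i})$ for $1\le i\le d-2$; and the outer identity $u_{n-d}=q_1+1/(p_1+u_{n-d+1}-u_{n-1})$. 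The crucial structural observation is that $u_{n-i-1}$ depends only on $p_k,q_k$ with $k\ge d-i$, so it is independent of the remaining canonical coordinates.

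For the second and third identities, write $\{u_{n-d},u_{n+j}\}=\{u_{n-d},q_{j+1}\}=\partial u_{n-d}/\partial p_{j+1}$. When $j<d-1$, both $u_{n-1}$ and the explicit $p_1$ are independent of $p_{j+1}$, so differentiating the outer formula yields $\partial u_{n-d}/\partial p_{j+1}=-(u_{n-d}-u_n)^2\,\partial u_{n-d+1}/\partial p_{j+1}$ (or $-(u_{n-d}-u_n)^2$ directly when $j=0$). Iterating the recursion contributes a factor $-(u_{n-d+k}-u_{n+k})^2$ at each step, and the chain terminates after $j+1$ steps at $u_{n-d+j}$, whose recursion involves $p_{j+1}$ directly while $u_{n-d+j+1}$ does not; multiplying the factors yields $(-1)^{j+1}\prod_{k=0}^{j}(u_{n-d+k}-u_{n+k})^2$. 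For $j=d-1$, now both $u_{n-d+1}$ and $u_{n-1}$ depend on $p_d$, producing two contributions: the telescoping chain from $u_{n-d+1}$ runs through all $d-1$ steps and gives $(-1)^d\prod_{k=0}^{d-1}(u_{n-d+k}-u_{n+k})^2$, while $\partial u_{n-1}/\partial p_d=-(u_{n-1}-u_{n+d-1})^2$ contributes the additional $-(u_{n-d}-u_n)^2(u_{n-1}-u_{n+d-1})^2$.

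For the first identity, the independence of $u_{n-j-1}$ from $q_1,p_1$ (valid for $j<d-1$) reduces the bracket to $\{u_{n-d},u_{n-j-1}\}=-(u_{n-d}-u_n)^2\,\{u_{n-d+1}-u_{n-1},u_{n-j-1}\}$. It therefore suffices to prove the auxiliary claim that $\{u_{n-i-1},u_{n-j-1}\}=0$ for all $0\le i,j\le d-2$, which I plan to establish by induction on $m=\max(i,j)$. Assuming $i\le j=m$, I expand $u_{n-m-1}$ via its recursion: the brackets $\{u_{n-i-1},q_{d-m}\}=\partial u_{n-i-1}/\partial p_{d-m}$ and $\{u_{n-i-1},p_{d-m}\}=-\partial u_{n-i-1}/\partial q_{d-m}$ vanish because $d-m<d-i$ falls outside the support of $u_{n-i-1}$, while $\{u_{n-i-1},u_{n-m}\}=\{u_{n-i-1},u_{n-(m-1)-1}\}=0$ by the inductive hypothesis. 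Nondegeneracy of the resulting bracket on the $u$'s is automatic: the Ostrogradsky change of variables is birational, so the pushforward of the canonical nondegenerate bracket is itself nondegenerate.

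I expect the main obstacle to be the auxiliary vanishing claim underlying the first identity. The induction itself is clean, but it depends essentially on the strictly triangular support structure of the backward variables in the $(q,p)$ coordinates; any direct expansion without exploiting this structure would require intricate cancellations summed over all $p_k,q_k$ in the common support via the continued-fraction chain rule.
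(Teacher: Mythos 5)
Your proposal is correct, and for the forward brackets $\{u_{n-d},u_{n+j}\}$ it is essentially the paper's own computation: both reduce to $\partial u_{n-d}/\partial p_{j+1}$ and evaluate it by propagating the chain rule through the continued-fraction recursion, picking up a factor $-(u_{n-d+k}-u_{n+k})^2$ at each level (the paper simply records the resulting closed-form partial derivatives rather than narrating the recursion). Where you genuinely depart from the paper is in the vanishing of the backward brackets $\{u_{n-d},u_{n-j-1}\}$. The paper writes out, in closed form, all of the partial derivatives $\partial u_{n-d}/\partial p_i$, $\partial u_{n-d}/\partial q_i$, $\partial u_{n-r-1}/\partial p_i$, $\partial u_{n-r-1}/\partial q_i$ over the common support $i\geq d-r$ and verifies that the full canonical sum telescopes to zero. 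You instead peel off the outermost layer of $u_{n-d}$ to reduce the claim to $\{u_{n-i-1},u_{n-j-1}\}=0$ for $0\leq i,j\leq d-2$, and prove that auxiliary statement by induction on $\max(i,j)$, using only the triangular support property (that $u_{n-i-1}$ depends solely on $q_k,p_k$ with $k\geq d-i$) and the inductive hypothesis applied to the inner variable $u_{n-m}$. Your route avoids both the closed-form expressions for the backward partials and the cancellation bookkeeping, at the cost of not producing those explicit formulas (which the paper reuses verbatim for the second and third identities); conversely the paper's route is a single monolithic verification whose cancellation is asserted rather than displayed. Your closing remarks on nondegeneracy (birationality of the Ostrogradsky change of variables) and on the Poisson-map property (symplecticity in the canonical coordinates, already established before the theorem) supply points the paper's proof leaves implicit.
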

\begin{proof} 
In order to prove  the first of the formulae in 
(2.9),  
we note that 
$$
\begin{array}{rl} 
\{ 
u_{n-d},u_{n-r-1} 
\} 
 = &
\{ 
[q_1; p_1-q_d+q_2-\frac{1}{p_d},p_2+q_2+q_3,\ldots ,p_{d-1}+q_{d-1}+q_d, p_d ],  \\   

& 
[
q_{d-r}; p_{d-r}+q_{d-r}+q_{d-r+1}, 
\ldots 
, 
p_{d-1}+q_{d-1}+q_d, p_d 
] \}  
\\ 
  = &  
\sum_{i\geq d-r} \left(\frac{\partial u_{n-d}}{\partial p_i}\frac{\partial u_{n-r-1}}{\partial q_i}- 
\frac{\partial u_{n-d}}{\partial q_i}\frac{\partial u_{n-r-1}}{\partial p_i}
\right) .
\end{array} 
$$ 
Then we 
anticipate the next section by setting 
$v_{n}:=u_{n-d}-u_{n}$, 
to find that 
$$
\begin{array}{rcll} 
\frac{\partial u_{n-d}}{\partial p_i} & = &   
(-1)^i v_n^2\ldots v_{n+i-1}^2,  
& i<d ,  
\\
\frac{\partial u_{n-d}}{\partial p_d} & = &  
(-1)^d \, v_n^2\ldots v_{n+d-1}^2-v_n^2 v_{n+d-1}^2 , 
& \\
\frac{\partial u_{n-d}}{\partial q_i} & = & 
(-1)^{i-1}\, v_n^2\ldots v_{n+i-2}^2\big(1-v_{n+i-1}^2\big), 
& i<d , 
\\
\frac{\partial u_{n-d}}{\partial q_d} & = &  
v_n^2+(-1)^{d-1}\, v_n^2\ldots v_{n+d-2}^2, 
& \\
\frac{\partial u_{n-r-1}}{\partial p_i} & = &  
(-1)^{r+i-1-d}\, v_{n+d-r-1}^2\ldots v_{n+i-1}^2 , 
& \\
\frac{\partial u_{n-r-1}}{\partial q_i} & = &   
(-1)^{r+i-d}\, v_{n+d-r-1}^2\ldots v_{n+i-2}^2\big(1-v_{n+i-1}^2\big), 
 & i<d, \\
\frac{\partial u_{n-r-1}}{\partial q_d} & = & 
(-1)^{r}\, v_{n+d-r-1}^2 \ldots v_{n+d-2}^2, 
\end{array} 
$$
for $i\geq d-r\geq 2$,    
and 
hence we obtain $\{u_{n-d}, u_{n-r-1}\}=0$. 
Next,  we use
$\{u_{n-d}, u_{n+j}\} 
=\{u_{n-d}, q_{j+1}\}=\frac{\partial u_{n-d}}{\partial p_{j+1}}$,
where $0\leq j\leq d-1$, from which the second and third formulae in (2.9) 
follow. 
\end{proof}

\section{Poisson structures and tau-functions for the KdV maps}
As mentioned above, the discrete KdV equation can be  derived from  
the 
double 
pKdV equation. 
This suggests that the symplectic structure given in the previous section 
can be used to find  a
Poisson structure for the 
$(d-1,-1)$-reduction of the discrete KdV equation. 
It turns out that, in addition to the bracket induced from 
double pKdV, 
each of the KdV maps has a second, independent Poisson bracket, which is obtained from 
a Hirota bilinear form in terms of tau-functions. The second bracket is constructed by 
making use of a connection with Somos recurrences 
and cluster algebras.  

\subsection{First Poisson structure from pKdV} 
Whenever $u_n$ is a solution of equation~\eqref{E:2copies}, 
$v_n=u_{n-d}-u_{n}$  
satisfies a difference equation of order $d$, namely 
 \begin{equation}
 \label{E:KdV_Reduction}
 v_{n+d-1}-v_{n+1}-\frac{1}{v_{n+d}}+\frac{1}{v_n}=0.
 \end{equation}
Alternatively, by starting from a solution of (\ref{double}) with the periodicity property (\ref{N1}), we 
see that this yields a solution $v=v_{\ell,m}$  of (\ref{E:kdv}) with the same periodicity, 
and so (writing this as $v_n$, with the same abuse of notation as before) 
it is clear that equation (\ref{E:KdV_Reduction}) is just 
the $(d-1,-1)$-reduction of the discrete KdV equation. 
Equivalently, the ordinary difference equation~\eqref{E:KdV_Reduction} corresponds to 
the $d$-dimensional  map 
\begin{equation}
\label{E:kdvmap} 
\varphi: \quad   
(v_0,v_{1},\ldots,v_{d-1})\mapsto\left(v_{1},v_{2},\ldots, v_{d-1},\frac{v_0}{1+v_{d-1}v_0-v_{1}v_0}\right). 
\end{equation}
The case $d=2$ is  trivial, so henceforth we consider $d\geq 3$. 

In the above, the suffix $n$ has been dropped, taking a fixed $d$-tuple $(v_0,v_1, \ldots ,v_{d-1})$ 
in $d$-dimensional  space. However, because the map (\ref{E:kdvmap}) is obtained from a 
recurrence of order $d$,  all of the formulae are invariant under simultaneous shifts of all indices 
by an arbitrary amount $n$, 
i.e. $v_j\to v_{n+j}$ for each $j$. For a fixed $n$, say $n=0$, the 
formulae in Theorem \ref{ubracket} define a Poisson bracket in dimension $2d$, which 
can be used to calculate the brackets between the quantities $v_j = u_{j-d}-u_j$ for $j=0,\ldots, d-1$.  
Remarkably, these brackets can be rewitten in terms of $v_j$ alone; in other words, these 
quantities form a Poisson subalgebra of dimension $d$. Hence   
this provides the  first of two ways to endow \eqref{E:kdvmap} with a Poisson structure. 
\begin{theore}\label{vbracket1}
The $d$-dimensional map (\ref{E:kdvmap})  preserves the 
Poisson 
bracket $\{ \, , \, \}_1$ defined by 
\begin{equation}
\label{E:KdVsymp}
\displaystyle{
\{v_{i},v_{j}\}_1=\left\{
\begin{array}{ll}
(-1)^{j-i}\prod_{r=i}^jv_{r}^2, & 0<j-i<d-1,\\
\left(1+(-1)^{d-1}\prod_{r=1}^{d-2}v_{r}^2\right) v_0^2v_{d-1}^2, & j-i=d-1.
\end{array}
\right. }
\end{equation} 
This bracket is degenerate, with one Casimir when $d$ is odd, and two independent Casimirs 
when $d$ is even. 
\end{theore}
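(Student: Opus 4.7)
The plan is to push forward the pKdV Poisson bracket of Theorem \ref{ubracket} to the $d$-dimensional $v$-space, inherit the map-preservation property, and then analyse the kernel of the resulting structure matrix.

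First I exploit the fact that the pKdV map on the $u$'s acts by an index shift and is Poisson, which forces $\{u_p, u_q\}$ to depend only on $q-p$. Combined with Theorem \ref{ubracket}, this yields $\{u_p, u_q\} = 0$ whenever $0 < |p-q| < d$, an explicit expression for $|p-q| \in \{d, \ldots, 2d-2\}$, and the separate formula for $|p-q| = 2d-1$. Expanding $\{v_i, v_j\}_1 = \{u_{i-d} - u_i, u_{j-d} - u_j\}$ into four $u$-brackets, for $0 \le i < j \le d-1$ three of the four terms vanish (the surviving index differences there lie strictly between $0$ and $d$), leaving only $-\{u_{i-d}, u_j\}$. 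Substituting the corresponding formula from Theorem \ref{ubracket} then reproduces \eqref{E:KdVsymp}: the case $0 < j - i < d - 1$ comes from the second formula of (2.9), while $j - i = d - 1$ uses the third.

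Because every $\{v_i, v_j\}_1$ is a function of the $v$'s alone, the $v_i$ generate a Poisson subalgebra of the $u$-algebra, so the Jacobi identity for \eqref{E:KdVsymp} is automatic. Preservation by $\varphi$ is then immediate: the shift $\Phi$ on the $u$'s is Poisson, and under it $v_i \mapsto v_{i+1}$ for $i < d-1$ with the new top coordinate given by \eqref{E:kdvmap}, so $\varphi$ is merely the restriction of $\Phi$ to this subalgebra.

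The remaining claim about the kernel is where the real work lies. Since $\Omega_1 := (\{v_i, v_j\}_1)$ is skew-symmetric of size $d$, its rank is even, which for $d$ odd already forces the nullity to be at least one; but for $d$ even I must rule out full rank and show that the rank is exactly $d-2$. I would pin the counts down by exhibiting the Casimirs explicitly. A direct computation for $d = 3$ yields the single Casimir $C = v_1 - 1/v_0 - 1/v_1 - 1/v_2$, and low-dimensional experiments suggest that in general the Casimirs are built from alternating sums of $v_j$'s and $1/v_j$'s at a specific pattern of indices. The main obstacle is that \eqref{E:KdVsymp} is \emph{not} log-canonical in the $v_j$ (the prefactors $\Omega_{ij}$ still depend on intermediate $v$'s), so the standard monomial-Casimir trick for log-canonical brackets does not apply directly. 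As a backup strategy, once Liouville integrability is established in section 4, any first integral of $\varphi$ that Poisson-commutes with the full commuting family must be a Casimir, and the required Casimirs can be extracted from the integrals constructed there.
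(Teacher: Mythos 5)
Your derivation of the bracket formulae \eqref{E:KdVsymp} is correct and is exactly the paper's route: the paper simply states that the result ``follows immediately from Theorem \ref{ubracket}'', and your expansion of $\{u_{i-d}-u_i,\,u_{j-d}-u_j\}$ into four terms, three of which vanish because their index differences lie strictly between $0$ and $d$, is the computation being alluded to. The subalgebra observation (so that Jacobi is inherited) and the preservation-by-restriction argument also match the paper.

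The genuine gap is in the Casimir count, which you acknowledge you have not proved. The paper's argument is Lemma \ref{gcoords}: one passes to the coordinates $g_0=-1/v_0$, $g_j=v_{j-1}-1/v_j$ for $j=1,\dots,d-1$, in which the bracket \eqref{E:KdVsymp} becomes \emph{constant} except for the single entry $\{g_1,g_{d-1}\}_1=\nu/g_0^2$ (see \eqref{C6E:KdVPoi}). In these coordinates the Casimirs are essentially linear: ${\cal C}=\sum_j g_j+\nu/g_0$ for odd $d$, and the two alternating sums \eqref{evenc} for even $d$, and the corank ($1$ or $2$ according to parity) can be read off from the near-constant structure matrix. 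This change of variables is the missing idea; without it, your rank-parity argument only gives corank $\geq 1$ for odd $d$ and nothing for even $d$, and your ``low-dimensional experiments'' do not close the case. Your backup strategy is also unsound on two counts: first, a first integral that Poisson-commutes with a maximal commuting family need not be a Casimir (every member of the family has that property), so commutation does not identify Casimirs; second, the paper's Liouville-integrability argument in section 4 \emph{uses} the Casimir count of Lemma \ref{gcoords} to decide how many additional integrals are needed, so invoking section 4 here would be circular. Note also that the $g$-coordinates are not an ad hoc trick: as the paper explains after \eqref{momo}, the $g_i$ are the $(1,1)$ entries of the Lax matrices in the monodromy matrix, which is what motivates their introduction.
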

The above result follows immediately from Theorem \ref{ubracket}, apart from the statement about 
the Casimirs, which will be explained shortly. Here we first give a couple of examples for illustration. 

\begin{example}\label{d3br1} 
{\em When $d=3$ the map  $\varphi$ given by (\ref{E:kdvmap}) preserves the 
bracket 
$$ 
\{ v_0,v_1\}_1 = -v_0^2v_1^2, \quad 
\{ v_1,v_2\}_1 = -v_1^2v_2^2, \quad 
\{ v_0,v_2\}_1 = (1+v_1^2)v_0^2v_2^2. 
$$
This 
Poisson  
bracket has rank two, with a Casimir $\cal C$ that is also a first integral for the map, i.e. 
$\varphi^* \cal C =\cal C$ with 
$$ 
{\cal C} = v_1-\frac{1}{v_0} -\frac{1}{v_1}-\frac{1}{v_2}. 
$$  
}
\end{example} 

\begin{example}\label{d4br1} 
{\em 
When $d=4$ the map  (\ref{E:kdvmap}) preserves the 
bracket specified by 
$$ 
\{ v_0,v_1\}_1 = -v_0^2v_1^2, \quad 
\{ v_0,v_2\}_1 = v_0^2v_1^2v_2^2, \quad 
\{ v_0,v_3\}_1 = (1-v_1^2v_2^2)v_0^2v_3^2, 
$$ 
where all other brackets $\{v_i,v_j\}_1$ for $0\leq i,j\leq d-1$ are determined from 
skew-symmetry and the Poisson property of $\varphi$. 
This is a bracket of rank two, having two independent Casimirs given by 
$$ 
{\cal C}_1= v_1-\frac{1}{v_0} -\frac{1}{v_2}, \quad {\cal C}_2= v_2-\frac{1}{v_1} -\frac{1}{v_3}, 
\qquad 
\mathrm{with} \quad \varphi^*{ \cal C}_1 ={\cal C}_2, \quad \varphi^* {\cal C}_2 ={\cal C}_1. 
$$  
}
\end{example} 

\begin{remar} 
The Casimirs ${\cal C}_j$ in the preceding example are 2-integrals \cite{haggar}, meaning that 
they are preserved by two iterations 
of the map, i.e. $(\varphi^*)^2 {\cal C}_j =  {\cal C}_j$ for $j=1,2$. The symmetric functions 
\beq\label{K} 
{\cal K} = {\cal C}_1+{\cal C}_2, 
\qquad {\cal K}' = {\cal C}_1{\cal C}_2 
\eeq 
provide two independent first integrals.  
\end{remar}

In order to make the properties of the Poisson bracket $\{ \, ,\,\}_1$ more  
transparent, we introduce some new coordinates, the motivation for which should 
become clear from 
the Lax pairs in section 4.  

\begin{lemm}
\label{gcoords} 
For all $d\geq 4$, 
with respect to the coordinates 
\beq\label{gs} 
g_0=-1/v_0, \qquad 
g_j =v_{j-1}-1/v_j, \quad j=1,\ldots , d-1, 
\eeq 
the first Poisson bracket for the KdV map (\ref{E:kdvmap})  is specified by the 
following relations for $\nu =1$ and $0\leq i<j\leq d-1$:
\begin{equation}
\label{C6E:KdVPoi}
\{g_{i},g_{j}\}_1=\left\{
\begin{array}{ll}
-1& \mbox{if}\  \, j-i=1,\\
1 & \mbox{if}\  \, j-i=d-1,\\
\nu /g_{0}^2& \mbox{if}\  \,i=1, j=d-1,\\\
0&\mbox{otherwise}.
\end{array}
\right.
\end{equation}
When $d$ is odd this bracket has the Casimir 
\beq\label{oddc} 
{\cal C} = g_0 + g_1 + \ldots + g_{d-1} + \frac{\nu}{g_0},   
\eeq  
while for even $d$ there are the two Casimirs 
\beq\label{evenc} 
{\cal C}_1 = g_0 + g_2+ \ldots + g_{d-2}, 
\qquad 
{\cal C}_2 =g_1 + g_3+ \ldots + g_{d-1} + \frac{\nu}{g_0}.   
\eeq  
\end{lemm}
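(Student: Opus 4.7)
The plan is to reduce the lemma to a direct computation using the change-of-variables formula
$$\{g_i,g_j\}_1 = \sum_{k,l}\frac{\partial g_i}{\partial v_k}\frac{\partial g_j}{\partial v_l}\{v_k,v_l\}_1,$$
with the $\{v_k,v_l\}_1$ taken from Theorem \ref{vbracket1}. The Jacobian of (\ref{gs}) is bidiagonal: $g_0$ depends only on $v_0$ with $\partial g_0/\partial v_0 = 1/v_0^2$, and for $j\geq 1$, $g_j$ depends only on $v_{j-1},v_j$ with $\partial g_j/\partial v_{j-1}=1$ and $\partial g_j/\partial v_j = 1/v_j^2$. Hence each bracket $\{g_i,g_j\}_1$ is a sum of at most four terms, organized by whether the indices hit the exceptional index $d-1$.

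I would first dispose of the bulk case $0<i<j<d-1$, where only the first clause of (\ref{E:KdVsymp}) (i.e.\ the first line of the analogue formula for $v$'s) is used. The four contributions assemble into products of the form $v_i^2 v_{i+1}^2 \cdots v_j^2$ with alternating signs; consecutive pairs telescope, leaving zero when $j-i\geq 2$ and $-1$ when $j-i=1$. The same telescoping handles $\{g_0,g_j\}_1$ for $1<j<d-1$, yielding zero, and gives $\{g_0,g_1\}_1=-1$ directly. The wrap-around bracket $\{g_i,g_{d-1}\}_1$ for $i\geq 1$ is the most delicate: one must combine $\{v_k,v_{d-2}\}_1$ (generic clause) with $\{v_k,v_{d-1}\}_1$ (exceptional clause). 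For $i\geq 2$ all terms cancel. For $i=0$, the lone $v_0$-contribution combined with the exceptional term reduces to $1$. For $i=1$, the factors $(-1)^{d-2}$ and $(-1)^{d-1}$ in the two clauses of (\ref{E:KdVsymp}) must combine so that every monomial containing $\prod_{r=1}^{d-2}v_r^2$ cancels, leaving exactly $v_0^2 = 1/g_0^2 = \nu/g_0^2$; pinning down this parity-dependent cancellation is the main technical obstacle.

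Finally, the Casimir statements follow from (\ref{C6E:KdVPoi}) alone. For each $i$, $\{g_i,\mathcal{C}\}_1$ or $\{g_i,\mathcal{C}_k\}_1$ is a finite sum whose generic terms cancel by the nearest-neighbour identity $\{g_i,g_{i-1}\}_1+\{g_i,g_{i+1}\}_1 = 1-1=0$. Non-trivial checks arise only at $i\in\{0,1,d-1\}$, where the wrap-around entry $\{g_1,g_{d-1}\}_1=\nu/g_0^2$ must be absorbed by the $\nu/g_0$ term via $\{f,\nu/g_0\}_1=-(\nu/g_0^2)\{f,g_0\}_1$. The parity of $d$ enters only through whether the sum over indices is connected into one Casimir $\mathcal{C}$, or splits (because the nearest-neighbour graph has two components after removing the $\nu/g_0$ correction) into the two Casimirs $\mathcal{C}_1, \mathcal{C}_2$. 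The degeneracy rank follows by counting: in the odd case there is precisely one Casimir, and in the even case two independent ones, as claimed.
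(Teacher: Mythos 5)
Your proposal is correct and follows exactly the route the paper intends: the paper states this lemma without proof, presenting it as an immediate consequence of the bracket in Theorem \ref{vbracket1}, and your change-of-variables computation via the bidiagonal Jacobian of (\ref{gs}), with the telescoping cancellations and the three exceptional wrap-around cases $i=0,1$ and $2\le i\le d-3$ against $j=d-1$, is precisely that verification (I checked the $i=1$, $j=d-1$ case: the $(-1)^{d-2}$ and $(-1)^{d-1}$ terms do cancel, leaving $v_0^2=\nu/g_0^2$). The only minor looseness is the phrase about the nearest-neighbour graph having ``two components'' in the even case — it is a $d$-cycle, which is bipartite for even $d$, and the two Casimirs are the sums over the two parts of the bipartition — but this does not affect the validity of the argument.
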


\begin{remar} 
When $\nu =0$ the Poisson bracket (\ref{C6E:KdVPoi}) is just the first Poisson bracket for the dressing chain, 
as given by equation (13) in \cite{shabves}. 
\end{remar}  

In terms of the coordinates $g_j$, the map (\ref{E:kdvmap})  is rewritten as  
\begin{equation}
\label{gkdv} 
\varphi: \quad   
(g_0,g_{1},\ldots,g_{d-1})\mapsto\left(g_{1}+\frac{1}{g_0},g_{2},\ldots, g_{d-1},\frac{g_0^2g_1}{1+g_0g_{1}}\right). 
\end{equation}
Note that for the special case $d=3$, as in Example \ref{d3br1}, part of the formula for the bracket (\ref{C6E:KdVPoi}) 
requires a slight modification, namely 
$ \{ g_1,g_2\}_1 = -1 + 1/g_0^2$.

\subsection{Second Poisson structure from cluster algebras for tau-functions} 

The discrete KdV equation (\ref{E:kdv}) was derived by Hirota in terms of tau-functions, 
via the B\"acklund transformation for the differential-difference KdV equation \cite{hirota}. 
In Hirota's approach, the solution of the discrete KdV 
equation is given in terms of a tau-function as $v= \t{ \tau } \h{\tau } / (\tau \h{ \t{\tau }})$, and at the 
level of the 
$(d-1,-1)$-reduction this becomes 
\beq\label{tauf} 
v_n = \frac{\tau_{n+1}\, \tau_{n+d-1}}{\tau_n\, \tau_{n+d}}. 
\eeq 
By direct substitution, it then follows that $v_n$ is a solution of (\ref{E:KdV_Reduction}) provided that 
$\tau_n$ satisfies the trilinear (degree three) recurrence relation 
\beq\label{tril} 
\bear{rcl} 
\tau_{n+2d}\, \tau_{n+d-1}\,\tau_{n+1}&  = & \tau_{n+2d-1}\, \tau_{n+d+1}\,\tau_{n} 
- \tau_{n+2d-1}\, \tau_{n+d-1}\,\tau_{n+2} \\ && + \tau_{n+2d-2}\, \tau_{n+d+1}\,\tau_{n+1}. 
\eear 
\eeq 
However, this relation can be further simplified upon dividing by $\tau_{n+1}\tau_{n+d}$, which 
gives the relation 
$$
\alpha_n = \frac{\tau_{n}\, \tau_{n+d+1} - \tau_{n+2}\tau_{n+d-1}}{\tau_{n+1}\,\tau_{n+d}} 
=  \frac{\tau_{n+d-1}\, \tau_{n+2d} - \tau_{n+d+1}\tau_{n+2d-2}}{\tau_{n+d}\,\tau_{n+2d-1}}=\alpha_{n+d-1}. 
$$ 
This immediately yields relations that are bilinear (degree two) in $\tau_n$. 
\begin{propositio}\label{somos} 
The solutions of the equation  (\ref{E:KdV_Reduction}) are given in terms of a tau-function 
by (\ref{tauf}), where $\tau_n$ satisfies the bilinear recurrence relation  
\beq\label{bil} 
\tau_{n+d+1}\, \tau_n =\alpha_n \, \tau_{n+d}\, \tau_{n+1} + \tau_{n+d-1}\, \tau_{n+2}
\eeq 
of order $d+1$, with the coefficient $\alpha_n$ having period $d-1$.      
\end{propositio}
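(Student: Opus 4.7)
The plan is to follow the derivation already sketched in the text, making each step rigorous. First I would substitute the ansatz (\ref{tauf}) directly into the reduced KdV relation (\ref{E:KdV_Reduction}). Writing each of the four terms $v_{n+d-1}$, $v_{n+1}$, $1/v_{n+d}$, $1/v_n$ as a ratio of shifts of $\tau$, multiplying through by the common denominator $\tau_{n+1}\tau_{n+d-1}\tau_{n+d}\tau_{n+2d-1}$, and collecting terms would produce the trilinear identity (\ref{tril}). This step is mechanical but the book-keeping of indices (with shifts up to $2d$) is where the main risk of algebraic error lies, so I would carry it out with a fixed convention, e.g. writing $v_n = T_{1}T_{d-1}/(T_0T_d)$ with $T_k := \tau_{n+k}$, so that the four ratios share exactly two factors and the cross-terms line up transparently.

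Next, I would divide the trilinear recurrence by $\tau_{n+1}\tau_{n+d}$ and regroup in two different ways: on one hand,
\begin{equation*}
\tau_{n+d+1}\tau_n - \tau_{n+d-1}\tau_{n+2} \;=\; \alpha_n\,\tau_{n+d}\tau_{n+1},
\end{equation*}
defines $\alpha_n$ as the ratio $(\tau_n\tau_{n+d+1}-\tau_{n+2}\tau_{n+d-1})/(\tau_{n+1}\tau_{n+d})$; on the other hand, the very same rearrangement applied to the terms shifted by $d-1$ in (\ref{tril}) shows that this ratio equals $(\tau_{n+d-1}\tau_{n+2d}-\tau_{n+d+1}\tau_{n+2d-2})/(\tau_{n+d}\tau_{n+2d-1})$, which is $\alpha_{n+d-1}$. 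Hence $\alpha_{n}=\alpha_{n+d-1}$, i.e.\ $\alpha_n$ is periodic of period $d-1$, and the bilinear identity (\ref{bil}) holds by construction.

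Finally, I would address the logical direction: the above shows that if $\tau_n$ satisfies (\ref{bil}) with a period-$(d-1)$ coefficient $\alpha_n$, then the $v_n$ defined by (\ref{tauf}) satisfy (\ref{E:KdV_Reduction}); conversely, given an arbitrary solution $v_n$ of (\ref{E:KdV_Reduction}), I would exhibit a $\tau_n$ yielding it via (\ref{tauf}) by prescribing $\tau_0,\ldots,\tau_d$ freely and then using (\ref{bil}) to propagate, with $\alpha_n$ determined by the initial data on one full period (this uses that (\ref{bil}) can be solved for $\tau_{n+d+1}$ whenever $\tau_n\neq 0$, so generic initial data yield a well-defined solution).

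The main obstacle I anticipate is purely computational: the verification that (\ref{tauf}) turns (\ref{E:KdV_Reduction}) into (\ref{tril}) requires an honest check that the non-cancelling terms match exactly, and one must be careful that the identification of $\alpha_n$ with $\alpha_{n+d-1}$ uses \emph{only} the trilinear relation, not any additional assumption on $\tau_n$. Once that identification is in hand, the reduction from trilinear to bilinear, and the periodicity of $\alpha_n$, are immediate corollaries.
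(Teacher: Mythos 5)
Your proposal follows the paper's own derivation essentially verbatim: you substitute (\ref{tauf}) into (\ref{E:KdV_Reduction}) to obtain the trilinear relation (\ref{tril}), then divide and regroup so that the ratio $\alpha_n=(\tau_n\tau_{n+d+1}-\tau_{n+2}\tau_{n+d-1})/(\tau_{n+1}\tau_{n+d})$ is seen to equal $\alpha_{n+d-1}$, which is exactly how the paper arrives at (\ref{bil}) with a period-$(d-1)$ coefficient. The only adjustment needed is in your converse step: given $v_n$ you cannot prescribe $\tau_0,\dots,\tau_d$ freely, since (\ref{tauf}) already determines $\tau_d$ (and all subsequent $\tau$'s) from the given $v_n$ once $\tau_0,\dots,\tau_{d-1}$ are chosen; with that correction the construction goes through, and the periodicity of $\alpha_n$ for this $\tau_n$ then follows from the same trilinear identity you established in the forward direction.
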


Apart from the presence of the periodic coefficient $\alpha_n$, the bilinear relation (\ref{bil}) has the form 
of a Somos-$(d+1)$ recurrence \cite{somos}. Such recurrence relations (with constant coefficients) are 
also referred to as three-term Gale-Robinson recurrences  
(after \cite{gale} and \cite{robinson} respectively). 

\begin{example}\label{s4int} {\em For $d=3$ the equation (\ref{bil}) is a Somos-4 recurrence 
with coefficients of period 2, that is 
$
\tau_{n+4}\, \tau_n =\alpha_n \, \tau_{n+3}\, \tau_{n+1} +  \tau_{n+2}^2$, with  
$\alpha_{n+2}=\alpha_n$.  
Due to the Laurent phenomenon \cite{fzlaurent}, the iterates of this recurrence are Laurent 
polynomials, i.e.  
polynomials in the initial values data and their reciprocals with integer coefficients; to be precise, 
$ 
\tau_n \in \mathbb{Z} [\alpha_0 ,\alpha_1, \tau_0^{\pm 1},  \tau_1^{\pm 1},\tau_2^{\pm 1},\tau_3^{\pm 1}] 
$ 
for all $n\in\mathbb{Z}$. 
This means that 
integer sequences can be generated from a suitable choice of initial data and coefficients. For instance, 
with the initial values $\tau_0= \tau_1= \tau_2= \tau_3= 1$ and parameters $\alpha_0=1$, $\alpha_1=2$, 
the Somos-4 recurrence yields an integer sequence beginning with 
$1,1,1,1,2,5,9,61,193, 1439, 13892,121853, 1908467,47166783   , \ldots$.  
}
\end{example} 

From the work of Fordy and Marsh \cite{fordymarsh}, it is known 
that, at least in the case where the coefficients are constant, recurrences of Somos type can be 
generated from sequences of mutations in a cluster algebra. For the purposes of this paper, the main 
advantage of considering the cluster algebra is that it provides a natural presymplectic structure for the tau-functions.  
A presymplectic form that is compatible with cluster mutations was presented in \cite{gsvduke}, and in 
\cite{fordyhonecluster} 
it was explained how this presymplectic structure is preserved by the recurrences considered in \cite{fordymarsh}. 

Cluster algebras are a new class of commutative algebras which were introduced in \cite{fz1}. Rather than 
having a set of generators and relations that are given from the start, the generators of a cluster algebra 
are defined recursively by an iterative process known as cluster mutation. For a coefficient-free 
cluster algebra, one starts from an initial set of 
generators (the initial cluster) of fixed size, 
which here we take to be $d+1$. 
If the initial cluster is  denoted by $(\tau_1,\ldots ,\tau_{d+1})$, 
then for each index $k$ one defines the mutation in the $k$ direction to be the transformation 
that exchanges one of the variables to produce a new cluster     
$(\tau_1',\ldots ,\tau_{d+1}')$ given by  
\beq\label{exrel}
\tau_j'=\tau_j, \quad j\neq k, \qquad 
\tau_k'\, \tau_k = \prod_{j=1}^{d+1}\tau_j^{[b_{jk}]_+} + \prod_{j=1}^{d+1}\tau_j^{[-b_{jk}]_+}, 
\eeq 
where the exponents in the exchange relation for $\tau_k'$ come from an integer 
matrix $B=(b_{ij})$ known as the {\it exchange matrix}, and we have used the notation  
$[b]_+=\max (b,0)$. 

As well as cluster mutation, there is an associated operation of matrix 
mutation, which acts on the matrix $B$; the details of this are omitted here. Fordy and Marsh gave conditions 
under which skew-symmetric exchange matrices $B$ have a cyclic symmetry (or periodicity) under mutation, and 
classified all such $B$ with period 1 \cite{fordymarsh}. They also showed how  
this led to recurrence relations for cluster variables, by taking cyclic sequences of mutations. The requirement of periodicity puts conditions on the elements of the skew-symmetric matrix $B$, 
which (for a suitable 
labelling of indices) can be written as 
\beq\label{reln1}
b_{i,d+1}=b_{1,i+1}, \qquad i=1,\ldots , d,
\eeq
and
\beq\label{reln2}
b_{i+1,j+1}=b_{i,j}+b_{1,i+1} [-b_{1,j+1}]_+  - b_{1,j+1} [-b_{1,i+1}]_+  ,
\qquad 1\leq i,j \leq d. 
\eeq
The corresponding recurrence is defined by iteration of the map 
$
(\tau_1,  \ldots , \tau_d, \tau_{d+1}) \mapsto 
 (\tau_2,  \ldots , \tau_{d+1}, \tau_{1}')  
$
associated with the exchange relation (\ref{exrel}) for index $k=1$, 
where the exponents are given by the entries $b_{1,j}$ in the first row of $B$. Moreover, given a recurrence 
relation of this type, the conditions (\ref{reln1}) and (\ref{reln2}) allow 
the rest of the matrix $B$ to be constructed from the exponents corresponding to the first row, and 
these conditions are also necessary and sufficient for a  log-canonical 
presymplectic form $\omega$, as in (\ref{pre}) below, to be preserved (see Lemma 2.3 in \cite{fordyhonecluster}). 
In general this two-form is closed, but it may be degenerate.  

For the case at hand, the exponents appearing in the two monomials on the right hand side of (\ref{bil}) 
specify the first row of the matrix $B$ as 
$
(0, 1, -1, 0, \ldots , 0, -1, 1), 
$
and the rest of this matrix is found by applying  (\ref{reln1}) and (\ref{reln2}). 
Although the foregoing discussion was put in the context of coefficient-free cluster algebras, the presence of coefficients 
in front of these two monomials does not affect  the behaviour of the corresponding log-canonical two-form 
under iteration. Thus we obtain 
\begin{lemm}
\label{exchange} 
For all $d\geq 5$, the Somos-$(d+1)$ recurrence (\ref{bil}) 
preserves the presymplectic form 
\beq\label{pre} 
\omega = \sum_{i<j} \frac{b_{ij}}{\tau_i\tau_j} \, \dd\tau_i\wedge \dd\tau_j, 
\eeq 
given in terms of  the entries of the associated skew-symmetric exchange matrix  
$B=(b_{ij})$ of size $d+1$, where  (with an asterisk denoting the omitted entries below the diagonal) 
$$ 
B = \left( \begin{array}{ccrrccrr} 
0 & 1 & -1 & 0 & \cdots 
& 0 & -1 & 1 \\ 
   &  \ddots & 2 & -1 & 0 & \cdots 
& 1 & -1 \\ 
   &     & \ddots & \ddots & \ddots & \ddots 
& 
& 0 \\ 
  &    &          & \ddots & \ddots  & \ddots 
& \ddots  & \vdots 
\\  
 &     & &    & \ddots 
& \ddots & \ddots & 0 \\ 
&   &     & &    
& \ddots  & 2 & -1 \\  
&* &  & & 
& & 0 & 1 \\ 
& & & & 
& & & 0 
\end{array} \right) .
$$   
\end{lemm}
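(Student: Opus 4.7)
The plan is to recognise the Somos-$(d+1)$ recurrence (\ref{bil}) as a cluster mutation in a rank $d+1$ cluster algebra (with coefficients), and to reduce the statement to the Fordy-Marsh periodic-quiver framework \cite{fordymarsh} combined with the invariance result of \cite{fordyhonecluster}. The argument splits naturally into three steps: identify the first row of the exchange matrix $B$ from the recurrence; propagate it to the rest of $B$ using (\ref{reln1}) and (\ref{reln2}); and quote the two-form invariance theorem.

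For the first step, I would regard $(\tau_n,\tau_{n+1},\ldots,\tau_{n+d})$ as an initial cluster of size $d+1$ and $\tau_{n+d+1}$ as the variable produced by a single mutation at index $k=1$. Comparing the two monomials $\alpha_n\tau_{n+1}\tau_{n+d}$ and $\tau_{n+2}\tau_{n+d-1}$ on the right of (\ref{bil}) with the template (\ref{exrel}), and choosing the conventional assignment of the two monomials to $[b_{j1}]_+$ and $[-b_{j1}]_+$, forces $b_{1,2}=1$, $b_{1,3}=-1$, $b_{1,d}=-1$, $b_{1,d+1}=1$, and $b_{1,j}=0$ otherwise. This reproduces the first row displayed in the statement.

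For the second step, (\ref{reln1}) immediately fixes the last column as $b_{i,d+1}=b_{1,i+1}$, whose only non-zero entries are $b_{1,d+1}=1$, $b_{2,d+1}=-1$, $b_{d-1,d+1}=-1$, $b_{d,d+1}=1$. I would then compute rows $i=2,3,\ldots,d$ inductively from (\ref{reln2}). Because only four entries of the first row are non-zero, the quantities $[\pm b_{1,j+1}]_+$ appearing in the recursion vanish for all but a short list of $j$, and the induction reduces to a finite case check. A direct computation of rows $2$ and $3$ reveals the stable super-diagonal value $2$ and the $(-1,\,1)$ fragment near the right edge; a brief bookkeeping argument shows that these patterns persist under further applications of (\ref{reln2}), producing the band structure displayed in the statement. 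Skew-symmetry of $B$ is automatic from the Fordy-Marsh construction.

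For the third step, I would invoke Lemma~2.3 of \cite{fordyhonecluster}, which asserts that whenever a skew-symmetric $B$ satisfies the periodicity conditions (\ref{reln1})-(\ref{reln2}), the associated mutation map $(\tau_1,\ldots,\tau_{d+1})\mapsto(\tau_2,\ldots,\tau_{d+1},\tau_1')$ preserves the log-canonical two-form $\omega = \sum_{i<j}(b_{ij}/(\tau_i\tau_j))\,\dd\tau_i\wedge \dd\tau_j$. The coefficient $\alpha_n$ in (\ref{bil}) does not affect this conclusion, since a scalar prefactor on either monomial in the exchange relation cancels out when one computes $\dd\log\tau_1'$, as noted in the paragraph preceding the lemma. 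The main obstacle is the combinatorial bookkeeping in the second step: one must check that iterating (\ref{reln2}) really does produce the displayed band pattern for every $d\geq 5$, with no spurious non-zero entries appearing in later rows. Once that is in place, the rest of the proof is essentially a citation of existing cluster-algebraic machinery.
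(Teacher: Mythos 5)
Your proposal follows essentially the same route as the paper: the first row of $B$ is read off from the two monomials in (\ref{bil}), the remaining entries are generated by (\ref{reln1}) and (\ref{reln2}), and the preservation of the log-canonical two-form is obtained by citing Lemma~2.3 of \cite{fordyhonecluster}, with the observation that the periodic coefficients $\alpha_n$ do not affect the conclusion. The only difference is that you make explicit the inductive bookkeeping needed to verify the band pattern of $B$, which the paper leaves implicit.
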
 
\begin{remar}\label{s45} 
In each of the special cases $d=3,4$, the exchange matrix  does not fit into the general pattern above. 
For the details of the case $d=3$, when (\ref{bil}) is a Somos-4 relation, see Example 2.11 in \cite{fordyhonecluster}; 
and for $d=4$ (Somos-5) see Example 1.1 in \cite{FordyHoneSIGMA}. 
\end{remar} 
The exchange matrices being considered here are all degenerate: for $d$ odd, $B$ has a two-dimensional 
kernel, while for $d$ even, the kernel is three-dimensional. In the odd case, the kernel is associated with  
the action of a two-parameter group of scaling symmetries, namely 
\beq\label{scale}  
\tau_n \rightarrow \rho \, \sigma^n \, \tau_n, \qquad \rho, \sigma \neq 0,    
\eeq 
while in the even case there is the additional symmetry 
\beq\label{extra} 
\tau_n\rightarrow \xi^{(-1)^n}\, \tau_n, \qquad \xi \neq 0. 
\eeq 
In the theory of tau-functions such scalings are known as gauge transformations. 
By Lemma 2.7 in \cite{fordyhonecluster} (see also section 6 therein), symplectic coordinates 
are obtained by taking a complete set of invariants for these scaling symmetries. 
With a suitable choice of coordinates, denoted below by $y_n$, the associated symplectic maps 
can be written in the form of recurrences, like so: 
\begin{itemize}
\item {\bf Odd $d$:} The quantities $y_n =\tau_{n+2}\tau_n/\tau_{n+1}^2$ are 
invariant under (\ref{scale}), and satisfy the difference equation 
\beq\label{dodd} 
y_{n+d-1} \, (y_{n+d-2}\, \cdots\,  y_{n+1})^2 \, y_n =\alpha_n \, y_{n+d-2}\, \cdots \, y_{n+1} +1. 
\eeq 
\item  {\bf Even $d$:} The quantities $y_n =\tau_{n+3}\tau_n/(\tau_{n+2}\tau_{n+1})$ are 
invariant under both (\ref{scale}) and (\ref{extra}), and satisfy the difference equation 
\beq\label{deven} 
y_{n+d-2}\, \cdots \, y_n =\alpha_n \, y_{n+d-3}y_{n+d-5}\, \cdots \, y_{n+1} +1. 
\eeq 
\end{itemize}

\begin{example} \label{pqrt} 
{\em Both  of the cases $d=3$ and $d=4$ lead to iteration of maps in the plane with coefficients that vary periodically, 
namely 
$$ 
\qquad \,\,\, 
d=3: \qquad 
y_{n+2}\, y_{n+1}^2\, y_n = \alpha_n\, y_{n+1} +1, \qquad \mathrm{with} \quad \alpha_{n+2}=\alpha_n ,
$$ 
$$ 
\mathrm{and} \quad  
d=4: \qquad y_{n+2}\, y_{n+1}\, y_n = \alpha_n\, y_{n+1} +1, \qquad \mathrm{with} \quad \alpha_{n+3}=\alpha_n . 
$$ 
}
\end{example}
\begin{remar}\label{nonqrt} 
Due to the presence of the periodic coefficients, the latter maps are not of standard QRT type \cite{qrt1},  
but they reduce to symmetric QRT maps when $\alpha_n=$constant.  
Maps in the plane of this more general type have recently been studied systematically by Roberts \cite{nonqrt}. 
\end{remar} 

In general, the solutions of (\ref{dodd}) or (\ref{deven}) correspond to the iterates of a symplectic map, in 
dimension $d-1$ or $d-2$, respectively. To be more precise, rather than just iterating a single map, 
each iteration depends on the coefficient $\alpha_n$ which varies with a fixed period 
(as in Proposition \ref{somos}), but the same symplectic structure is preserved at each step.   
The appropriate symplectic form in the coordinates $y_n$ can be obtained 
from (\ref{pre}), and by a direct calculation the associated nondegenerate  bracket  
is found; this is  presented as follows. 

\begin{lemm}\label{ybrs} 
For $d$ odd, each iteration of (\ref{dodd}) preserves the nondegenerate Poisson bracket specified by 
\beq\label{yodd} 
\{y_i,y_j\} = (-1)^{j-i+1}\, y_i\, y_j, \qquad 0\leq i<j\leq d-2.  
\eeq 
For $d$ even, the map defined by (\ref{deven}) preserves the nondegenerate bracket in dimension $d-2$ given by 
\beq\label{yeven} 
\{y_i,y_{i\pm 1} \} = \pm \, y_i\, y_{i\pm 1}, 
\eeq 
with all other brackets $\{y_i,y_j\}$  for  $0\leq i<j\leq d-3$  being zero. 
\end{lemm}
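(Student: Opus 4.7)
The plan is to derive the two brackets by pulling back the presymplectic form $\omega$ of Lemma~\ref{exchange} to the symplectic quotient coordinatised by the invariants $y_n$, following the general principle from \cite{fordyhonecluster} that cluster-algebraic presymplectic forms descend to invariant Poisson brackets on scaling quotients. For odd $d$, the infinitesimal generators of the two-parameter group (\ref{scale}) span $\ker B$, so $\omega$ descends to a nondegenerate symplectic form on a $(d-1)$-dimensional quotient on which $y_0,\ldots,y_{d-2}$ provide coordinates (by Lemma~2.7 in \cite{fordyhonecluster}); for even $d$ one adjoins (\ref{extra}), the kernel of $B$ becomes three-dimensional, and the $d-2$ quantities $y_n$ coordinatise the resulting $(d-2)$-dimensional quotient. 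Invariance of the induced bracket under (\ref{dodd}) or (\ref{deven}) is then automatic: Lemma~\ref{exchange} gives preservation of $\omega$, the scaling symmetries commute with the Somos recurrence, and so the latter descends to a Poisson map on the quotient.

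To identify the bracket, I would pass to logarithmic variables $u_i = \log\tau_i$ and $v_n = \log y_n$. In these coordinates $\omega = \sum_{i<j} b_{ij}\,\dd u_i\wedge\dd u_j$ has constant coefficients, and the change of variables is linear and translation invariant:
\[
v_n = u_n - 2u_{n+1} + u_{n+2} \ \ (\text{odd } d), \qquad v_n = u_n - u_{n+1} - u_{n+2} + u_{n+3} \ \ (\text{even } d).
\]
Therefore the induced two-form on the quotient is also constant in the $v$'s, and equivalently the induced Poisson bracket is log-canonical, $\{y_i,y_j\} = c_{ij}\,y_iy_j$ for certain constants $c_{ij}$. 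I would extract the $c_{ij}$ by choosing a linear complement to $\ker B$ (e.g.\ fix $u_0, u_1$ for odd $d$, together with the parity gauge $\sum (-1)^i u_i$ for even $d$), expressing each $v_n$ on that complement, and pairing $\dd v_i$ and $\dd v_j$ against the Poisson bivector dual to $\omega$ restricted to the complement. Jacobi for the resulting bracket is automatic, since a log-canonical bracket with constant coefficients in the $\log$ variables satisfies Jacobi trivially.

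The substantive obstacle is evaluating those pairings in closed form, which a priori requires inverting $B$ on a complement of $\ker B$. Two structural features rescue the computation. First, translation invariance of $B$ together with the translation invariance of the substitution $v_n = v_n(u)$ means $c_{ij}$ depends only on $j-i$, so only finitely many brackets (essentially $\{y_0,y_1\}$ and $\{y_0,y_2\}$ in the odd case, and $\{y_0,y_1\}$ in the even case) need be computed directly, the rest following by translation. Second, because $v_n$ is a second difference of the $u_i$ (odd case) or a parity-preserving iterated difference (even case), the pairings against the banded matrix $B$ telescope: this should produce the alternating pattern $(-1)^{j-i+1}$ in the odd case, and force all brackets except nearest-neighbour ones to vanish in the even case. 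Nondegeneracy is then a dimension count: the number of invariants matches $\tfrac12 \operatorname{rank}\omega$ in each parity, and the explicit bracket matrices displayed in (\ref{yodd}) and (\ref{yeven}) are readily seen to be of full rank.
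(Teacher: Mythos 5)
Your route is the same as the paper's: the paper simply states that the bracket is "obtained from (\ref{pre})" by "a direct calculation," i.e.\ by restricting the presymplectic form of Lemma \ref{exchange} to the gauge-invariant coordinates $y_n$ via Lemma 2.7 of \cite{fordyhonecluster} and inverting; your first paragraph and the passage to logarithmic variables are a faithful elaboration of exactly that. The formulas $v_n=u_n-2u_{n+1}+u_{n+2}$ (odd $d$) and $v_n=u_n-u_{n+1}-u_{n+2}+u_{n+3}$ (even $d$) are correct, and the dimension counts match the kernel dimensions of $B$.

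Two points in your execution need repair, however. First, $B$ as displayed in Lemma \ref{exchange} is \emph{not} translation invariant: its superdiagonal is $1,2,2,\dots,2,1$ and it carries nonzero corner entries in the last two columns of the first two rows, so "translation invariance of $B$" cannot be the reason that $c_{ij}$ depends only on $j-i$. The correct argument is the one you already have in your first paragraph: Lemma \ref{exchange} shows the recurrence preserves $\omega$, the recurrence descends to the shift $(y_0,\dots)\mapsto(y_1,\dots)$ on the quotient, and hence the induced log-canonical constants satisfy $c_{i+1,j+1}=c_{ij}$ wherever both sides are defined. Second, even granting shift-covariance, this reduces the unknowns to the family $c_{0,k}$ for $k=1,\dots,d-2$ (odd case) or $k=1,\dots,d-3$ (even case) — a $d$-dependent list, not just $\{y_0,y_1\}$ and $\{y_0,y_2\}$. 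So the alternating pattern $(-1)^{j-i+1}$ and the vanishing of all non-nearest-neighbour brackets in the even case must each be verified for every separation $k$; the assertion that the pairings "should telescope" is a plausible heuristic but is precisely the content of the lemma, and as written it is asserted rather than proved. To close the argument you would need to exhibit the telescoping identity explicitly (equivalently, invert $B$ on a complement of its kernel in closed form for general $d$). To be fair, the paper offers no more detail on this point than you do.
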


The bracket for the variables $y_i$ is the key to deriving a second Poisson structure 
for the KdV maps when 
$d$ is odd. 

\begin{theore}\label{kdvbr2} 
In the case that $d$ is odd, the map (\ref{E:kdvmap}) preserves a second Poisson bracket, which is 
specified  in terms of the coordinates (\ref{gs}) by 
\beq\label{gbr2} 
\{g_i,g_j\}_2 = (-1)^{j-i+1}\, g_i\, g_j, \qquad 0\leq i<j\leq d-1. 
\eeq  
\end{theore}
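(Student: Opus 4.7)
The plan is to transport the log-canonical Poisson bracket on the $y$-coordinates from Lemma~\ref{ybrs} to the $g$-coordinates, via the tau-function substitution (\ref{tauf}) together with the bilinear recurrence (\ref{bil}) from Proposition~\ref{somos}.

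First, I would express each $g_j$ as a monomial in the $y$-variables (with $\alpha$-coefficients treated as scalars). Starting from $v_n=\tau_{n+1}\tau_{n+d-1}/(\tau_n\tau_{n+d})$, a short manipulation gives
$$ v_{j-1}-\frac{1}{v_j}=\frac{\tau_j(\tau_{j+1}\tau_{j+d-2}-\tau_{j-1}\tau_{j+d})}{\tau_{j-1}\tau_{j+1}\tau_{j+d-1}}, $$
and applying (\ref{bil}) at index $n=j-1$ to eliminate $\tau_{j+d}\tau_{j-1}$ simplifies this to $-\alpha_{j-1}\tau_j^2/(\tau_{j-1}\tau_{j+1})=-\alpha_{j-1}/y_{j-1}$, so $g_j=-\alpha_{j-1}/y_{j-1}$ for $j=1,\ldots,d-1$. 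Separately, a telescoping product of $y_m=\tau_{m+2}\tau_m/\tau_{m+1}^2$ over $m=0,\ldots,d-2$ collapses to $\tau_0\tau_d/(\tau_1\tau_{d-1})$, which equals $-g_0$; thus $g_0=-\prod_{m=0}^{d-2}y_m$.

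Second, I would push the log-canonical bracket $\{y_i,y_j\}=(-1)^{j-i+1}y_iy_j$ through these monomial expressions. For $1\le i<j\le d-1$ both $g_i,g_j$ are single-$y$ reciprocals, so a one-line computation with $\{1/y_a,1/y_b\}=\{y_a,y_b\}/(y_a^2y_b^2)$ yields $\{g_i,g_j\}_2=(-1)^{j-i+1}g_ig_j$ directly.

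Third, for $i=0$ and $1\le j\le d-1$ the bracket $\{g_0,g_j\}_2$ reduces, by the log-canonical algebra of exponents, to $g_0g_j$ times $-\sum_{m=0}^{d-2}\omega_{m,j-1}$, where $\omega_{a,b}$ are the structure constants of the $y$-bracket. Splitting this sum at $m=j-1$ and applying geometric-series cancellation gives an alternating expression whose value depends on the parity of $d-2-(j-1)$; the hypothesis that $d$ is odd makes $d-2$ odd, and the sum evaluates to $(-1)^{j+1}$, yielding $\{g_0,g_j\}_2=(-1)^{j+1}g_0g_j$ as required. This parity step is where I expect the main technical care to be needed, and it is also the point that explains why the second bracket is only claimed in the odd-$d$ case: for even $d$ the relevant $y$-bracket (Lemma~\ref{ybrs}) has a sparser tridiagonal structure, and the induced $g$-bracket loses its log-canonical form.

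Finally, invariance under $\varphi$ follows automatically: the map $\varphi$ corresponds to the shift $n\mapsto n+1$ on tau-indices, under which the $y$-recurrence (\ref{dodd}) preserves the $y$-bracket by Lemma~\ref{ybrs}, while the coefficients $\alpha_n$ are merely cyclically permuted. Since the $g_j$ are built from $y$'s and (scalar) $\alpha$'s, $\varphi^*$ intertwines the two brackets; as a consistency check, one can verify that the image formulas from (\ref{gkdv}), namely $\varphi^*(g_0)=g_1+1/g_0$ and $\varphi^*(g_{d-1})=g_0^2g_1/(1+g_0g_1)$, reproduce respectively $-\prod_{m=1}^{d-1}y_m$ and $-\alpha_0/y_{d-1}$ after using (\ref{dodd}).
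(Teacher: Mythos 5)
Your proposal is correct and follows essentially the same route as the paper's proof: expressing $g_j=-\alpha_{j-1}/y_{j-1}$ and $g_0=-y_0y_1\cdots y_{d-2}$ via the tau-function substitution and the bilinear relation, then transporting the log-canonical bracket of Lemma~\ref{ybrs}. You merely spell out the parity computation for $\{g_0,g_j\}_2$ and the $\varphi$-invariance, which the paper leaves implicit with ``the brackets $\{g_0,g_j\}$ follow in the same way,'' and your evaluation of the alternating sum to $(-1)^{j+1}$ checks out.
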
 
\begin{proof}Substituting for $v_n$ from the formula (\ref{tauf}) and making use of the bilinear 
equation (\ref{bil}) produces the identities  
$$ 
v_{j-1}-\frac{1}{v_j}
=\frac{\tau_j(\tau_{j+1}\tau_{j+d-2}-\tau_{j-1}\tau_{j+d})}{\tau_{j+d-1}\tau_{j-1}\tau_{j+1}} = 
-\frac{\alpha_{j-1}\tau_j^2}{\tau_{j-1}\tau_{j+1}}, 
$$ 
so that, by the definition of the symplectic coordinates $y_j$,   
$g_j=- \alpha_{j-1}/y_{j-1}$ for $j = 1, \ldots , d-1$. 
A similar calculation in terms of tau-functions also yields 
$g_0 = - y_0\, y_1 \cdots y_{d-2}$. 
Noting that the coefficients $\alpha_i$ 
play the role of constants 
with respect to the bracket (\ref{yodd}), 
this immediately implies that  the  induced  brackets between the $g_i$ are    
$$\{g_i,g_j\}=\alpha_{i-1}\alpha_{j-1}y_{i-1}^{-2}y_{j-1}^{-2}\{y_{i-1},y_{j-1}\}=(-1)^{j-i+1}g_ig_j$$ 
for $1\leq i< j\leq d-1$, which agrees with (\ref{gbr2}).  By making use of the preceding 
formula for $g_0$ in terms of $y_i$,  the brackets $\{g_0,g_j\}$ follow in the same way.  
\end{proof}
\begin{remar}\label{2cas} 
The bracket (\ref{gbr2}) is the same as the quadratic bracket for the dressing chain 
(for the case where all parameters $\beta_i$ are zero in \cite{shabves}). It has the Casimir 
\beq\label{cas2} 
{\cal C}^* = g_0 \, g_1 \cdots g_{d-1}, 
\eeq 
which is also given in terms of the coefficients of the bilinear equation (\ref{bil}) by 
$  {\cal C}^* =-\prod_{j=0}^{d-2}\alpha_j$. 
\end{remar}

The case where $d$ is even is slightly more complicated, because we do not have a direct way to derive a second 
Poisson bracket in terms of the coordinates $g_i$ (or equivalently $v_i$). The reason for this difficulty 
is that, from the tau-function expressions, although the quantities $g_i$ remain the same under the action 
of the two-parameter symmetry group (\ref{scale}), they are not invariant under the 
additional scaling (\ref{extra}), but rather they transform differently according to the parity of 
the index $i$:
\beq\label{gscale} 
g_i\longrightarrow \xi ^{\pm 4} \, g_i \qquad \mathrm{for} \quad \mathrm{even/odd} \quad i. 
\eeq 
In order to obtain a fully invariant set of variables, we introduce 
a projection $\pi$ from dimension $d$ to dimension $d-1$:  
$$ 
\pi: \quad f_i = g_ig_{i+1}, \qquad i = 0, \ldots , d-2.       
$$ 
By regarding the new coordinates $f_i$ as functions of the $g_j$, we get an induced map 
$\varphi '$ in dimension $d-1$, which is compatible with $\varphi$ 
in the sense that $\varphi  \cdot \pi = \pi \cdot \varphi '$; this has the form 
\beq\label{phip} 
\bear{rl} 
\varphi ': & \quad (f_0,f_1, \ldots , f_{d-2})  \\ 
& \mapsto \left(  f_1(1+f_0^{-1}), f_2, \ldots , f_{d-2}, \frac{f_0 f_2\cdots f_{d-2}}{(1+f_0^{-1})f_1f_3\cdots f_{d-3}}\right) . 
\eear 
\eeq 
In terms of tau-functions, the quantities $f_i$ are 
invariant under the action of the full three-parameter group of gauge transformations, which means 
that they can be expressed as functions of the invariant symplectic coordinates $y_n$, and hence we can 
derive a Poisson bracket for them. 

\begin{theore}\label{fbr2} 
In the case that $d$ is even, the map (\ref{phip}) preserves a Poisson bracket, which is 
specified  in terms of the coordinates $f_i$  by 
\beq\label{brf2} 
\{f_i,f_{i\pm 1}\}_2 = \pm \, f_i\, f_{i\pm 1},  
\eeq  
with all other brackets $\{f_i,f_j\}_2$  for  $0\leq i<j\leq d-2$  being zero. 
\end{theore}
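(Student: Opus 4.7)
The plan is to mirror the proof of Theorem \ref{kdvbr2} by pulling back the $y$-bracket of Lemma \ref{ybrs} through the tau-function parametrization. The obstruction in the even case is precisely that $g_i$ alone is not invariant under the additional gauge symmetry (\ref{extra}), but the products $f_i = g_i g_{i+1}$ are fully invariant under the three-parameter gauge group, and hence expressible as functions of the symplectic coordinates $y_n$.

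First I would derive explicit tau-function formulas for each $f_i$. Reusing the identity $g_j = -\alpha_{j-1}\tau_j^2/(\tau_{j-1}\tau_{j+1})$ for $j \geq 1$ (which was established in the proof of Theorem \ref{kdvbr2} and is independent of the parity of $d$), and using $y_n = \tau_{n+3}\tau_n/(\tau_{n+2}\tau_{n+1})$, one immediately obtains $f_i = \alpha_{i-1}\alpha_i/y_{i-1}$ for $i = 1,\ldots,d-2$. For $f_0 = g_0 g_1$, substituting $g_0 = -\tau_0\tau_d/(\tau_1\tau_{d-1})$ and telescoping the ratios $\tau_{n+2}/\tau_n$ via the $y_n$ (which for $d$ even picks up only the odd-indexed factors) yields $f_0 = \alpha_0\, y_1 y_3 \cdots y_{d-3}$.

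Next I would compute the induced brackets via the Leibniz rule, noting that the coefficients $\alpha_j$ are constants with respect to the $y$-bracket. For $1 \leq i < j \leq d-2$, $\{f_i, f_j\}_2$ reduces to a scalar multiple of $\{1/y_{i-1}, 1/y_{j-1}\}$, which by Lemma \ref{ybrs} is nonzero precisely when $j = i+1$; a short calculation then gives $\{f_i, f_{i+1}\}_2 = f_i f_{i+1}$. The case $j = 1$ of $\{f_0, f_j\}_2$ yields $f_0 f_1$ by a similar computation. For $j \geq 2$, I would split by parity. When $j$ is even, $y_{j-1}$ has odd index while every factor of $f_0$ also has odd index; since Lemma \ref{ybrs} requires the indices to differ by one, the bracket vanishes term by term. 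When $j$ is odd with $j \geq 3$, then $y_{j-1}$ has even index and its two neighbours $y_{j-2}$ and $y_j$ are both factors of $f_0$; by Lemma \ref{ybrs}, the contributions $\{y_{j-2}, y_{j-1}\} = y_{j-2}y_{j-1}$ and $\{y_j, y_{j-1}\} = -y_j y_{j-1}$ in the Leibniz expansion have opposite signs and cancel exactly.

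Finally, invariance of $\{\cdot,\cdot\}_2$ under $\varphi'$ is automatic: the $y$-map defined by (\ref{deven}) preserves the $y$-bracket by Lemma \ref{ybrs}, the $f_i$ are functions of the $y_n$, and the factorization $\varphi \cdot \pi = \pi \cdot \varphi'$ together with the $n \mapsto n+1$ shift built into the tau-function formulas identifies $\varphi'$ on the $f_i$ with the $y$-map pulled through the substitution. I expect the main obstacle to be the sign-cancellation in the odd-$j$ case, which requires careful parity bookkeeping of the factors of $f_0$ relative to $y_{j-1}$; the even-$j$ case, by contrast, is transparent once one notes the parity mismatch.
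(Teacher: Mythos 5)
Your proposal is correct and follows essentially the same route as the paper's proof: it establishes the tau-function formulas $f_i=\alpha_{i-1}\alpha_i/y_{i-1}$ for $i=1,\ldots,d-2$ and $f_0=\alpha_0\,y_1y_3\cdots y_{d-3}$, and then pulls back the bracket of Lemma \ref{ybrs}. The only difference is that you spell out the Leibniz-rule computation and the parity cancellation for $\{f_0,f_j\}_2$ that the paper leaves implicit in the phrase ``directly imply''.
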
 
\begin{proof}
Following the proof of Theorem \ref{kdvbr2}, we have 
$g_ig_{i+1} = \alpha_{i-1}\alpha_i\tau_i\tau_{i+1}/(\tau_{i-1}\tau_{i+2})$, 
whence, in terms of the symplectic coordinates for  (\ref{deven}), we have 
$ f_i =  \alpha_{i-1}\alpha_i/y_{i-1}$ for $i =1, \ldots , d-2$. 
A similar calculation yields a slightly different formula for index $i=0$: 
$f_0=\alpha_0\, y_1y_3\cdots y_{d-3}$, 
and then  the Poisson brackets   (\ref{yeven}) 
between the $y_j$  directly imply that the brackets (\ref{brf2}) hold between the $f_i$. 
\end{proof} 
\begin{remar} \label{2fcas} 
The bracket (\ref{brf2}) has the Casimir 
\beq\label{casf2} 
{\cal C}^* = f_0 \, f_2 \cdots f_{d-2}, 
\eeq 
which is also given in terms of the coefficients of the bilinear equation (\ref{bil}) by 
$  {\cal C}^* =\prod_{j=0}^{d-2}\alpha_j$. 
\end{remar}

It may be unclear why there is the suffix $2$ on the bracket in  (\ref{brf2}), since we have not yet 
provided another Poisson bracket for the map (\ref{phip}). However, as will be explained in the next section, 
when $d$ is even the quantities $f_i$ form a Poisson subalgebra for the bracket $\{\,, \,\}_1$ of Theorem \ref{vbracket1}. 
As will also be explained, the brackets $\{\, , \,\}_1$ and  $\{\, , \,\}_2$ are compatible (in the sense that any linear 
combination of them is also a Poisson bracket), which means that a standard bi-Hamiltonian argument  can be used 
to show Liouville integrability of the maps for either odd or even $d$.

\section{First integrals and integrability of the KdV maps} 


The purpose of this section is to prove the following result. 

\begin{theore}
\label{kdvint} 
For each $d\geq 3$, 
the map (\ref{E:kdvmap}) is completely integrable in the Liouville sense.  
\end{theore}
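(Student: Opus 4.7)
The strategy is to produce enough functionally independent first integrals that Poisson-commute with respect to $\{\,,\,\}_1$. The map $\varphi$ lives in dimension $d$; the bracket of Theorem \ref{vbracket1} has rank $d-1$ with the single Casimir (\ref{oddc}) when $d$ is odd, and rank $d-2$ with the two Casimirs (\ref{evenc}) when $d$ is even. Including the Casimirs, Liouville integrability therefore requires $(d+1)/2$ integrals in the odd case and $d/2 + 1$ in the even case.

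My first step is to construct these integrals in closed form. I would exhibit a Lax pair for (\ref{E:kdvmap}) by applying the $(d-1,-1)$-periodic reduction to the standard Lax pair for the lattice KdV equation (\ref{E:kdv}). The associated monodromy matrix $L(\lambda)$ is polynomial in the spectral parameter $\lambda$, and the coefficients of $\mathrm{tr}\,L(\lambda)$ are automatically preserved by the dynamics. To convert these into explicit symmetric expressions in the $v_i$ (equivalently the $g_i$), I would apply the staircase method and the noncommutative Vieta expansion of \cite{Tranclosedform, Kampclosedform}. For $d$ even, the Casimirs in (\ref{evenc}) behave as a 2-integral pair under $\varphi$, so they must first be replaced by the symmetric combinations $\mathcal{K}, \mathcal{K}'$ of (\ref{K}) to give genuine first integrals.

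The second step, and the main technical obstacle, is verifying pairwise involution $\{I_a, I_b\}_1 = 0$. This is most transparent in the coordinates $g_j$ of Lemma \ref{gcoords}, where the bracket is almost constant — the only non-constant entry is $\nu/g_0^2$. The involution identities then reduce to combinatorial relations between staircase sums, which can be handled within the framework of \cite{TranInvo}. Functional independence of the full collection (integrals plus Casimirs) is then established by evaluating the Jacobian at a convenient generic point (e.g.\ all $g_j$ equal to $1$), reducing the problem to an explicit rank computation that can be pushed through by induction on $d$.

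A complementary, shorter proof exploits the compatibility of $\{\,,\,\}_1$ with the second bracket $\{\,,\,\}_2$ of Theorem \ref{kdvbr2} for odd $d$ (or Theorem \ref{fbr2} for even $d$, after passing through the projection $\pi$ to the induced map $\varphi'$ in (\ref{phip})). Compatibility — that any linear combination $\lambda_1\{\,,\,\}_1 + \lambda_2\{\,,\,\}_2$ is Poisson — can be verified directly on the generators using the formulae in Lemma \ref{gcoords} and Theorem \ref{kdvbr2}. The Magri--Lenard scheme then generates a hierarchy of commuting integrals without needing to touch the combinatorics of the staircase sums; identifying these with the closed-form integrals above closes the argument and sidesteps the hardest computation. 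The delicate points to watch in this route are the even-dimensional case, where the second bracket only makes sense after projection, and the need to supplement the Lenard chain with the Casimirs of both brackets in order to obtain the full count of integrals.
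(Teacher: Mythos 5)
Your proposal follows essentially the same route as the paper: closed-form integrals from the trace of the staircase monodromy matrix via the noncommutative Vieta expansion, involution checked in the $g_j$ coordinates within the framework of \cite{TranInvo}, and the bi-Hamiltonian/Lenard--Magri argument as the alternative proof (including the projection $\pi$ for even $d$). The only minor deviation is your independence check via a Jacobian at a generic point, where the paper instead identifies the leading ($\nu=0$) parts of the $I_r$ with the homogeneous dressing-chain integrals of distinct degrees $d-2r$; both work.
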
 

Recall that, for a Poisson map $\varphi$ in dimension $d$, Liouville integrability means that there should be 
$k$ Casimirs invariant under the map, so that the $(d-k)$-dimensional symplectic leaves 
of the Poisson bracket are 
preserved by $\varphi$, plus 
an additional $\frac{1}{2}(d-k)$ independent first integrals that are in involution 
with respect to the bracket. For the particular map in question, there is always the Poisson bracket $\{ \, , \,\}_1$, and from  
Lemma \ref{gcoords} this has  either one or two Casimirs, with symplectic leaves 
of dimension $d-1$ or $d-2$, for odd/even $d$ respectively; hence an additional 
$\lfloor \frac{1}{2}(d-1)\rfloor$ first integrals are required in this case. 
 

For the lowest values $d=3,4$, it is straightforward to verify complete integrability, since 
in those cases only one extra first integral is required, apart from the Casimirs of $\{\, ,\,\}_1$.  
When $d=3$, as in Example \ref{d3br1},  the Casimir $\cal C$  
is preserved by $\varphi$; the quantity ${\cal C}^*$ in (\ref{cas2}), which is 
the Casimir of the second bracket, is also a first integral. 
Similarly, for $d=4$, the two Casimirs in Example 
\ref{d4br1} provide the two first integrals  (\ref{K}), which are themselves Casimirs;  
the quantity ${\cal C}^* = f_0f_2=g_0g_1g_2g_3$, as in (\ref{casf2}), provides 
the extra first integral.   

\begin{example}\label{d3curve} 
{\em 
For $d=3$, each of the first integrals ${\cal C}$,  ${\cal C}^*$ of the map  (\ref{E:kdvmap}) 
define surfaces in 
three dimensions,  given by 
$$ 
\bear{rcl} 
XY^2Z- {\cal C}\, XYZ - XY-YZ-ZX& =& 0, \\ 
XY^2Z + {\cal C}^*\,XYZ-XY-YZ+1 & =& 0, 
\eear 
$$ 
respectively, in terms of 
coordinates $(X,Y,Z)\equiv (v_n,v_{n+1},v_{n+2})$. 
These two surfaces intersect in a curve of genus one, 
found 
explicitly by eliminating  the variable $Z$  above; this yields the biquadratic 
\beq\label{biquad} 
 ({\cal C}+ {\cal C}^*)\, X^2Y^2 +(X+Y)(XY-1)- {\cal C}\, XY =0. 
\eeq 
The embedding of such a curve in three dimensions can be seen from the orbit plotted 
in Figure 1. 
} 
\end{example} 

%
%
%
%

For higher values of $d$, it is not so obvious how 
to proceed, but the correct number 
of additional first integrals can be obtained by constructing a Lax pair for the map, as we now describe.  
  
\subsection{Lax pairs and monodromy} 

The discrete KdV equation (\ref{E:kdv}) is known to be integrable in the sense that it arises as the 
compatibility condition for a pair of linear equations. 
In   \cite{Nijhoff2006lecture}, a scalar Lax pair is given as follows: 
$$ 
\widehat{\widetilde{\phi}} =  v \widetilde{\phi}+\lambda \phi, 
\qquad 
\widehat{\phi}=\widetilde{\phi}+\frac{1}{v}\phi . 
$$ 
Upon introducing the vector $\Phi:= 
\left(\begin{array}{c} \widetilde{\phi} \\{\phi}\end{array}\right)$, 
the latter 
pair of scalar equations  
leads to a Lax pair  in matrix form, 
namely 
\begin{equation}
\label{E:Laxpairs}
\widetilde{\Phi}={\bf L}\Phi, \qquad \widehat{\Phi}={\bf M}\Phi, \qquad \mathrm{with} \quad {\bf L}=\begin{pmatrix}
v-\frac{1}{\widetilde{v}}&\lambda\\
1& 0
\end{pmatrix}, \quad  
{\bf M}=\begin{pmatrix}
v& \lambda\\
1&\frac{1}{v}
\end{pmatrix}.
\end{equation}
Equation~(\ref{E:kdv}) is equivalent to the compatibility condition   for 
the linear system (\ref{E:Laxpairs}), that is 
\beq\label{laxe} 
\widehat{{\bf L}}\, {\bf M}= \widetilde{{\bf M}}\, {\bf L}.
\eeq 
Note that the matrix ${\bf L}$ is associated with shifts in the $\ell$  (horizontal) direction, 
and ${\bf M}$ with shifts in the $m$  (vertical) direction. 

First integrals of each KdV map~\eqref{E:kdvmap} can be found by using the  staircase method 
\cite{CapelNijhoff1991, QCPN1991, Kamp2009Initialvalue, KampStaircase}. For the $(d-1,-1)$-reduction, 
a staircase on the $\mathbb{Z}^2$ lattice is built from paths consisting of 
$d-1$ horizontal steps and one vertical step. 
Taking an ordered product of Lax matrices along the staircase yields the monodromy matrix
\begin{equation}
\label{E:mono}
\mathcal{{ L}}_n={\bf M}_n^{-1}\, {\bf L}_{n+d-2}\cdots {\bf L}_n,   
\end{equation}
corresponding to $d-1$ steps to the right ($\ell \to \ell +1$) and one step down ($m\to m-1$), 
where ${\bf L}\to{\bf L}_n$ and 
${\bf M}\to{\bf M}_n$ under the reduction. 
As a consequence of (\ref{laxe}), the identity ${\bf L}_{n+d-1}{\bf M}_n = {\bf M}_{n+1}{\bf L}_n$ holds, 
which implies that $\mathcal{ L}_n$ satisfies the discrete Lax equation 
\beq \label{dlax} 
\mathcal{L}_{n+1} \, {\bf L}_n= {\bf L}_n \, \mathcal{L}_{n}. 
\eeq 
The latter holds if and only if  $v_n$ satisfies the difference equation 
(\ref{E:KdV_Reduction}). Since (\ref{dlax}) means that the spectrum of  $\mathcal{ L}_n$  is 
invariant under the shift $n\to n+1$, first integrals   for the KdV map can 
be constructed from the trace of the monodromy matrix (or powers thereof), which can be 
expanded in the spectral parameter $\lambda$.

For convenience, we conjugate $\mathcal{ L}_n$ in (\ref{E:mono}) by ${\bf M}_n$ and multiply by an overall 
factor of $\lambda -1$, 
which (upon setting $n=0$) gives a modified monodromy matrix 
\beq\label{momo} 
\mathcal{L}^\dagger = {\bf L}^\dagger_{d-1}\cdots {\bf L}^\dagger_1 \, {\bf L}^\dagger_0 ,  
\eeq 
where 
$$ 
{\bf L}^\dagger_{i} ={\bf L}_{i-1} = \left( \begin{array}{cc} g_{i} & \lambda \\ 1 & 0 \end{array}\right), \, 
 i=1, \ldots , d-1, 
\quad  \mathrm{but} \,\,   
{\bf L}^\dagger_0 =  \left( \begin{array}{cc} g_{0} & \lambda \\ 1 & \frac{\nu}{g_0} \end{array}\right) 
\, \mathrm{with} 
\, \, \nu =1.
$$ 
Thus we see the origin of the coordinates $g_i$ introduced in Lemma \ref{gcoords}: they are the $(1,1)$ entries 
in the Lax matrices that make up the monodromy matrix. 
We have already mentioned a connection with  the 
dressing chain at the level of the Poisson brackets, but it can be seen more directly here: up to taking an inverse 
and inserting a spectral parameter, when $\nu =0$ 
the expression (\ref{momo}) reduces to the monodromy  matrix for the dressing chain found in    
\cite{fordyhonecluster}. The parameter $\nu$ can be regarded as introducing inhomogeneity 
into the chain at $i=0$ (and for $\nu\neq 0$ it is always possible to rescale so that $\nu=1$).   

With the introduction of another spectral parameter $\mu$, we consider the characteristic polynomial 
of   $\mathcal{L}^\dagger$, which defines the spectral curve 
\beq\label{spec} 
\chi (\lambda , \mu):=\det (\mathcal{L}^\dagger -\mu \, 1) = \mu^2 - {\cal P}(\lambda )\, \mu 
+ (\nu -\lambda)(-\lambda)^{d-1}=0, 
\eeq 
with ${\cal P}(\lambda )= \mathrm{tr}\, \mathcal{L}^\dagger$. This curve in the $(\lambda ,\mu )$ plane 
is invariant under the KdV map  (\ref{E:kdvmap}), 
and the trace of  $\mathcal{L}^\dagger$ provides the polynomial  ${\cal P}$ whose 
non-trivial coefficients are first integrals of the map. 

\begin{example} \label{d3spec}
{\em 
When $d=3$ the spectral curve is of genus one, given by 
$ 
\mu^2 -({\cal C}^*+{\cal C}\lambda)\, \mu +(\nu -\lambda )\lambda^2=0. 
$ 
For $\nu =1$ this is isomorphic to the biquadratic curve 
(\ref{biquad}) corresponding to the intersection of the level sets of 
${\cal C}$ and ${\cal C}^*$. 
} 
\end{example}
\begin{example} \label{d4spec}
{\em 
When $d=4$ the spectral curve also has   genus one, being given by 
$
\mu^2 -(I_0+ I_1\, \lambda +2\lambda^2)\, \mu -(\nu -\lambda )\lambda^3=0$,  
with coefficients expressed in terms of $g_i$ by 
$$ 
 I_0 = {\cal C}^* = g_0g_1g_2g_3, 
\qquad 
 I_1 = {\cal K}'-\nu = g_0g_1+g_1g_2+g_2g_3+g_3g_0+\nu\, \frac{g_2}{g_0}. 
$$
} 
\end{example} 

In the next subsection, we will show how, for each $d$, expanding the trace of the monodromy matrix in powers of 
$\lambda$ gives the polynomial ${\cal P}$ of degree $\lfloor d/2\rfloor$. This implies that the hyperelliptic curve 
$\chi (\lambda ,\mu )=0$   as in (\ref{spec}) has genus $\lfloor \frac{1}{2}(d-1)\rfloor$. Thus we expect 
that the (real, compact) Liouville tori for the KdV map  should be identified with (a real component of) the 
Jacobian variety of this curve, since their dimensions coincide, and the map should correspond to a 
translation on the Jacobian.

\subsection{Direct proof from the Vieta expansion} 

In order to calculate the trace of the monodromy matrix explicitly, we split the Lax matrices as 
\begin{equation}
\label{E:spliting}
{\bf L}^\dagger_i=\lambda \begin{pmatrix}
0&1\\
0&0
\end{pmatrix}
+
\begin{pmatrix}
g_i&0\\
1&0
\end{pmatrix}
=\lambda \, {\bf X}+ {\bf Y}_i, \qquad i=1, \ldots d-1, 
\end{equation}
and similarly for ${\bf L}^\dagger_0$. 
This splitting fits into the framework of \cite{Tranclosedform},  with 
the coefficient of $\lambda$ being the same nilpotent matrix ${\bf X}$ in each case, and allows 
the application of the so-called Vieta expansion. 
  
Recall that, in the noncommutative setting, the Vieta expansion
is given by the formula 
\beq\label{vieta} 
\prod_{i=a}^{\overset{\curvearrowleft}{b}}(\lambda {\bf X}_i+{\bf Y}_i):= 
(\lambda {\bf X}_b+{\bf Y}_b)\ldots(\lambda {\bf X}_a+{\bf Y}_a)     
=\sum_{r=0}^{b-a+1}\lambda^{r}{\bf Z}^{a,b}_r,  
\eeq 
 where
$$ 
{\bf Z}^{a,b}_r  
= \sum_{a\leq i_1 < i_2 < \cdots < i_r \leq b}{\bf  Y}_b{\bf Y}_{b-1}\cdots
{\bf Y}_{i_r+1}{\bf X}_{i_r}{\bf Y}_{i_r-1} \cdots {\bf Y}_{i_1+1}{\bf X}_{i_1}{\bf Y}_{i_1-1} \cdots {\bf Y}_a  .
$$
In the case at hand, we have ${\bf X}_i ={\bf X}$ for all $i$, and 
by writing the monodromy matrix  in the form  (\ref{vieta}), 
we can 
use Lemma 8 in \cite{Tranclosedform} 
to expand 
the trace 
as 
\beq\label{trace} 
{\cal P}(\lambda)={\mathrm tr} \,  \mathcal{L}^\dagger= \sum_{r=0}^{\lfloor d/2\rfloor}I_r\, \lambda^r, 
\eeq 
where the first  integrals of the KdV map are given by the closed-form expression 
\beq
\label{E:KdVInv}
I_r=\Psi^{1,d-3}_{r-1}+g_0\Psi^{1,d-4}_{r-1}+\left(g_{d-1}+\frac{\nu}{g_0}\right)\Psi^{2,d-3}_{r-1} 
+\Psi^{2,d-4}_{r-2}+g_0g_{d-1} \Psi^{1,d-3}_r,
\eeq 
with 
\begin{equation}
\label{D:psi}
\Psi^{a,b}_r=\prod_{i=a}^{b+1}g_i\, \left(\sum_{a\leq i_1<i_1+1<i_2\ldots<i_r\leq b} \prod_{j=1}^r\frac{1}{g_{i_j}g_{i_j+1}}\right) . 
\end{equation}
The explicit form of these integrals mean that it is possible to give a direct verification that they are  
in involution, which is almost identical to the proof of Theorem 13 in \cite{TranInvo}. 
\begin{propositio}
\label{T:invo}
The first  integrals (\ref{E:KdVInv}) of the KdV map Poisson commute with respect to  the bracket $\{\, ,\,\}_1$ in~\eqref{E:KdVsymp}.
\end{propositio}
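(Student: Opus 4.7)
The plan is to carry out an explicit computation of $\{I_r,I_s\}_1$ in the $g$-coordinates, where both the integrals and the bracket have the most transparent form. In these coordinates the bracket (\ref{C6E:KdVPoi}) is extremely sparse: up to skew-symmetry, only the three families $\{g_{i},g_{i+1}\}_1=-1$, $\{g_{0},g_{d-1}\}_1=1$ and $\{g_{1},g_{d-1}\}_1=\nu/g_0^2$ are nonzero. Thus, by bilinearity, $\{I_r,I_s\}_1$ reduces to a finite sum of products of partial derivatives $\partial I_r/\partial g_i$ evaluated at these very few index pairs, and only the outermost variables $g_0$, $g_1$, $g_{d-2}$, $g_{d-1}$ participate in the ``long-range'' and $\nu$-dependent couplings.

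First I would record the basic structural identities for the building blocks $\Psi^{a,b}_r$ in (\ref{D:psi}). These are symmetric polynomial-like quantities with the crucial ``non-overlapping'' rule $i_{j+1}>i_j+1$; splitting the sum according to whether the index $a$ (respectively $b+1$) is used or not leads to telescoping relations of the form
\[
\Psi^{a,b}_r=\Psi^{a+1,b}_r\,g_a+\Psi^{a+2,b}_{r-1},\qquad \Psi^{a,b}_r=g_{b+1}\Psi^{a,b-1}_r+\Psi^{a,b-2}_{r-1},
\]
and analogous identities when one of the boundary $g_j$'s is differentiated. Using these, the partial derivatives $\partial I_r/\partial g_i$ can be expressed again in terms of $\Psi$'s with shifted index ranges, which is the key to keeping the algebra under control.

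Next I would compute the nearest-neighbour contribution $\sum_{i=0}^{d-2}(\partial_{g_i}I_r\,\partial_{g_{i+1}}I_s-\partial_{g_{i+1}}I_r\,\partial_{g_i}I_s)(-1)$ together with the long-range term from $\{g_0,g_{d-1}\}_1=1$, following the pattern of the proof of Theorem~13 in \cite{TranInvo}. There, one rewrites the nearest-neighbour sum as a sum of ``differences of products of $\Psi$'s'' which telescope, producing an expression which exactly cancels the long-range contribution from $\{g_0,g_{d-1}\}_1$. Because the form of $I_r$ in (\ref{E:KdVInv}) differs from the mKdV/pKdV case only through the extra summand $(\nu/g_0)\Psi^{2,d-3}_{r-1}$, the same telescoping will go through verbatim for all pieces not involving this extra term.

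The main obstacle, and the novelty compared with \cite{TranInvo}, is therefore the $\nu$-dependent contribution. It appears in two ways: from cross-brackets of the $(\nu/g_0)\Psi^{2,d-3}_{r-1}$ piece with the other four summands of $I_s$ (and vice versa), and from the single nonzero ``far'' bracket $\{g_1,g_{d-1}\}_1=\nu/g_0^2$ contracted against $\partial_{g_1}I_r\,\partial_{g_{d-1}}I_s-\partial_{g_1}I_s\,\partial_{g_{d-1}}I_r$. After using the identities for $\Psi^{1,d-3}_r$ and $\Psi^{2,d-3}_{r-1}$ to peel off the boundary variables $g_1$ and $g_{d-1}$, both groups of contributions should organise into the same telescoping combination of $\Psi$'s, multiplied by $\nu/g_0^2$, with opposite signs. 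I expect that verifying this precise cancellation (including correctly tracking the factor $g_0g_{d-1}\Psi^{1,d-3}_r$ and the shift in $r$) is the only genuinely new combinatorial step; once it is established, the proof of involutivity follows along the lines of \cite{TranInvo} without further modification.
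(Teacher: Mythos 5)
Your proposal follows essentially the same route as the paper: the authors also pass to the $g$-coordinates, observe that the brackets $\{g_i,g_j\}_1$ for $1\leq i,j\leq d-2$ coincide (up to sign) with those in \cite{TranInvo} and that the $\Psi^{a,b}_r$ appearing in (\ref{E:KdVInv}) depend only on $g_1,\ldots,g_{d-2}$, so that the mutual brackets of the $\Psi$'s are exactly those of Lemma 11 and Corollary 12 of \cite{TranInvo}, with the remaining boundary and $\nu$-dependent contributions left as a direct verification. Your identification of the $\nu$-dependent cancellation as the only genuinely new step, and your recursion identities for $\Psi^{a,b}_r$, are consistent with what the paper invokes implicitly.
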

\begin{proof} 
Writing the Poisson  structure in terms of the coordinates $g_i$ as in Lemma \ref{gcoords}, 
we see that (up to an overall factor of  $-1$) the brackets $\{g_i,g_j\}_1$ in (\ref{C6E:KdVPoi}) 
for $1\leq i,j\leq d-2$ are identical to the brackets between the coordinates $c_i$ given by 
equation (46) in  \cite{TranInvo}. The polynomial functions  given by (\ref{D:psi}) 
are the same as in the latter reference, and the particular functions $\Psi^{a,b}_r$ 
that appear in the formula (\ref{E:KdVInv}) only depend on $g_i$ for  $1\leq i\leq d-2$. 
This implies that the brackets between these functions are the same as those given in Lemma 11 and Corollary 12 in 
\cite{TranInvo}, whence it is straightforward to verify that $\{I_r,I_s\}_1=0$ for $0\leq r,s\leq \lfloor d/2\rfloor$. 
\end{proof} 
In order to complete the proof of Theorem \ref{kdvint}, it only remains to check that there are sufficiently 
many independent integrals. 

For odd $d$, the leading ($\nu =0$) part of the first integral $I_r$ is a cyclically symmetric,  homogeneous 
function of degree $d-2r$ in the $g_i$, being identical to the independent integrals of the dressing chain in \cite{shabves} 
(for parameters $\beta_i=0$), with terms of each odd degree appearing for $r=0,\ldots , \frac{d-1}{2}$. 
Hence these functions are also  independent in the case $\nu \neq 0$,  
corresponding to the addition of a rational term with $g_0$ in the denominator, which is 
linear in $\nu$  and  of  degree  $d-2r-2$ in the $g_i$. This means that there are $\deg {\cal P}= (d-1)/2 +1$ 
independent integrals, of which the last one is $I_{(d-1)/2}={\cal C}$, the Casimir of the bracket $\{\, ,\,\}_1$, 
as given in (\ref{oddc}). 

When $d$ is even, the leading  ($\nu =0$) part of $I_r$ has the same structure, with terms of 
each even degree $d-2r$ appearing for $r=0,\ldots ,\frac{d}{2}-1$, but for all $\nu$ the last one is trivial: 
$I_{d/2}=2$. Thus the coefficients of $\cal P$ provide only $\frac{d}{2}$ independent first integrals.  
The last non-trivial coefficient is a Casimir of $\{\, ,\,\}_1$, given by $I_{d/2-1}={\cal C}_1 {\cal C}_2-\nu$, 
in terms of the two Casimirs in (\ref{evenc}), and the Casimir ${\cal C}_1+{\cal C}_2$ 
provides one extra first integral, as required. 

In the next subsection, we 
outline  another proof of  Theorem \ref{kdvint}, based on the 
bi-Hamiltonian structure obtained from the two Poisson brackets 
$\{\, ,\,\}_1$ and $\{\, ,\,\}_2$.

\subsection{Proof via the bi-Hamiltonian structure}  

As an alternative to the direct calculation of brackets between the first 
integrals  (\ref{E:KdVInv}),  the two Poisson brackets can be used to show complete 
integrability, with a suitable Lenard-Magri chain; 
this is a standard method for bi-Hamiltonian systems \cite{magri}. 
This approach applies immediately to the coordinates $g_i$ when $d$ is odd and the brackets are given by  
(\ref{C6E:KdVPoi}) and  (\ref{gbr2}); but  minor modifications are required to apply it to the coordinates 
$f_i$ when $d$ is even. 

For $d$ odd, the first observation to make is that,  
just as in the case of the dressing  chain ($\nu =0$),  
the brackets  $\{\, ,\,\}_1$ and $\{\, ,\,\}_2$ are compatible with each other, 
in the sense that their sum, and hence any linear combination of them, is also a Poisson bracket. 
Thus these two compatible brackets form a bi-Hamiltonian structure for the KdV map $\varphi$. 
A pencil of Poisson brackets is defined by the bivector field 
${\tt P}_2-\lambda \, {\tt P}_1$, where ${\tt P}_j$ denotes the bivector corresponding to 
$\{\, ,\,\}_j$ for $j=1,2$. (In fact, since it defines a  bracket for all $\lambda$ and $\nu$, 
this gives {\it three} compatible Poisson structures.) 
Then, by a minor adaptation of Theorem 4.8 in   \cite{fordyhonecluster} (which, up to rescaling 
by a factor of 2, is the special case $\lambda =-1$ and $\nu =0$) we see that 
the trace of the monodromy matrix 
is a 
Casimir of the Poisson pencil, or in other words 
\beq \label{caspen} 
({\tt P}_2-\lambda \, {\tt P}_1)\, \lrcorner \, \dd {\cal P}(\lambda )=0. 
\eeq
Expanding this identity in powers of $\lambda$ yields a finite 
Lenard-Magri chain, starting with $I_0={\cal C}^*$,  
the Casimir of the bracket $\{\, ,\,\}_2$ (as in Remark \ref{2cas}), and ending with 
$I_{(d-1)/2}={\cal C}$, the Casimir of the bracket $\{\, ,\,\}_1$ (as in Lemma \ref{gcoords}): 
$
 {\tt P}_2\, \lrcorner \, \dd I_0 =0, 
{\tt P}_2\, \lrcorner \, \dd I_1 = {\tt P}_1\, \lrcorner \, \dd I_0 , 
\ldots 
 {\tt P}_2\, \lrcorner \, \dd I_r = {\tt P}_1\, \lrcorner \,  \dd I_{r-1} , 
\ldots   
 {\tt P}_1\, \lrcorner \, \dd I_{(d-1)/2}=0
$. 
It then follows, by a standard inductive argument, that 
$\{I_r,I_s\}_1=0=\{I_r,I_s\}_2$ for $0\leq r,s\leq  (d-1)/2$. 
Hence we see that Proposition \ref{T:invo} is a consequence 
of the bi-Hamiltonian structure in this case. 

In the case where $d$ is even, a slightly more indirect argument is necessary, making the projection 
$\pi$ and working with the map $\varphi '$ in dimension $d-1$, as given by (\ref{phip}). 
The main ingredient required is 
the expression for the relations between the coordinates $f_i$ for  $i=0,\ldots ,d-2$ 
with respect to the first bracket. The case $d=4$ is special, so we do this example first before 
summarizing the general case. 
\begin{example}\label{fdeven} 
{\em For $d=4$, we use Lemma \ref{gcoords} to calculate 
$$
\bear{rcl}  
\{ f_0,f_1\}_1 = \{g_0g_1,g_1g_2\}_1 & = &\{g_0,g_1\}_1g_1g_2 +g_1(\{g_0,g_2\}_1g_1 +g_0\{g_1,g_2\}_1) 
\\ 
&=& -g_1g_2-g_0g_1=-f_{0}-f_1, 
\eear 
$$ 
and similar calculations show that 
$$ 
\{f_0,f_2\}_1=f_1-\frac{f_0f_2}{f_1}+\nu\, \frac{f_1}{f_0}, \qquad 
\{f_1,f_2\}_1=-f_1-f_2+\nu\, \frac{f_1^2}{f_0^2}, 
$$ 
which implies that $f_0,f_1,f_2$ generate a three-dimensional Poisson subalgebra 
for the bracket $\{\, ,\,\}_1$. It follows that (\ref{phip}) is a Poisson map 
(in three dimensions) with respect to the restriction of this bracket to the subalgebra. 
Moreover, the restricted bracket $\{\, ,\,\}_1$ for the $f_i$ is compatible with 
the bracket $\{\, ,\,\}_2$ given by (\ref{brf2}) with $d=4$.  
The coefficients of the corresponding spectral curve, as in Example \ref{d4spec}, 
can be written as functions of the $f_i$, i.e. 
$$ 
I_0=    f_0f_2, \qquad I_1 = f_0+f_1+f_2+\frac{f_0f_2}{f_1}+\nu\,\frac{f_1}{f_0}, 
$$ 
and these generate the short Lenard-Magri chain 
$ \{ \, \cdot\, , I_0\}_2 =0$, 
$\{ \,\cdot \, , I_1\}_2 =\{ \, \cdot\,  , I_0\}_1$, 
$\{\,  \cdot \, , I_1\}_1 =0$. 
} 
\end{example}  
\begin{theore}
\label{fcoords} 
For all even $d\geq 6$, the first Poisson structure for the KdV map (\ref{E:kdvmap}) 
reduces to a bracket for the coordinates $f_i=g_i\, g_{i+1}$, 
as specified by the 
following relations for $\nu =1$ and $0\leq i<j\leq d-2$:
\begin{equation}
\label{f1}
\{f_{i},f_{j}\}_1=\left\{
\begin{array}{ll}
-f_i-f_{i+1}& \mbox{if}\  \, j-i=1,\\  
\\
-\frac{f_if_{i+2}}{f_{i+1}} &\mbox{if}\ \, j-i=2,\\  
\\
\frac{f_1f_3\cdots f_{d-3}}{f_2f_4\cdots f_{d-4}}\left(1+\frac{\nu}{f_0} \right) & \mbox{if}\ j-i=d-2,\\ 
\\
\nu \, \frac{f_1^2f_3\cdots f_{d-3}}{f_{0}^2f_2\cdots f_{d-4}}& \mbox{if}\ i=1, j=d-2,\\\ 
\\
0&\mbox{otherwise}.
\end{array}
\right.
\end{equation}
This Poisson bracket is  preserved by the map (\ref{phip}).  
\end{theore}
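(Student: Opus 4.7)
The plan is to compute $\{f_i,f_j\}_1=\{g_ig_{i+1},g_jg_{j+1}\}_1$ directly by the Leibniz rule, using the explicit values of $\{g_k,g_l\}_1$ recorded in Lemma \ref{gcoords}. Expanding,
\begin{equation*}
\{g_ig_{i+1},g_jg_{j+1}\}_1 = \{g_i,g_j\}_1\, g_{i+1}g_{j+1} + g_i\{g_{i+1},g_j\}_1\, g_{j+1} + g_j\{g_i,g_{j+1}\}_1\, g_{i+1} + g_ig_j\{g_{i+1},g_{j+1}\}_1,
\end{equation*}
and only a small set of $(k,l)$ produce nonzero values: the adjacent pairs with $\{g_k,g_{k+1}\}_1=-1$, together with $\{g_0,g_{d-1}\}_1=1$ and $\{g_1,g_{d-1}\}_1=\nu/g_0^2$. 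A case split on $j-i$ then exhausts the problem.

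For $j-i=1$ the two adjacent-index contributions combine to $-g_ig_{i+1}-g_{i+1}g_{i+2}=-f_i-f_{i+1}$; for $j-i=2$ only $\{g_{i+1},g_{i+2}\}_1=-1$ survives, giving $-g_ig_{i+3}=-f_if_{i+2}/f_{i+1}$; for $3\leq j-i\leq d-3$ all four summands vanish. The substance of the proof lies in the wrap-around pairs $(i,j)=(0,d-2)$ and $(i,j)=(1,d-2)$, where the non-standard brackets $\{g_0,g_{d-1}\}_1$ and $\{g_1,g_{d-1}\}_1$ both enter. Direct substitution gives $g_1g_{d-2}+\nu g_{d-2}/g_0$ and $\nu g_2g_{d-2}/g_0^2$ respectively, and the key observation is the telescoping identity
\begin{equation*}
\frac{f_1f_3\cdots f_{d-3}}{f_2f_4\cdots f_{d-4}} = g_1g_{d-2},
\end{equation*}
which rewrites both expressions as rational monomials in the $f_i$ matching the prescribed formulas in (\ref{f1}). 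This simultaneously establishes that $f_0,\ldots,f_{d-2}$ generate a Poisson subalgebra of $(\R^d,\{\,,\,\}_1)$ and verifies the explicit brackets stated.

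For the second assertion, the projection $\pi:(g_0,\ldots,g_{d-1})\mapsto(f_0,\ldots,f_{d-2})$ is then a Poisson map onto $f$-space equipped with the induced bracket, and by construction intertwines $\varphi$ with $\varphi'$, i.e.\ $\pi\circ\varphi=\varphi'\circ\pi$. Combined with the fact that $\varphi$ preserves $\{\,,\,\}_1$ (Theorem \ref{vbracket1} transported through Lemma \ref{gcoords}), a one-line pullback computation through $\pi$ delivers $\varphi'^*\{f_i,f_j\}_1=\{\varphi'^*f_i,\varphi'^*f_j\}_1$. The main obstacle is purely bookkeeping: one must confirm that no $g$-bracket is inadvertently overlooked when $i=0$ or $j+1=d-1$ for small $j-i$, and verify the telescoping identity above collapses correctly. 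Both are routine for $d\geq 6$, and the lower bound is precisely what prevents the generic pattern ($j-i\in\{1,2\}$) from colliding with the wrap-around pattern ($j-i=d-2$), as occurs for $d=4$ (cf.\ Example \ref{fdeven}).
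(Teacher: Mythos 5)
Your proposal is correct and follows essentially the same route as the paper, which only carries out the Leibniz-rule computation explicitly for the special case $d=4$ (Example \ref{fdeven}) and leaves the general even case as a direct extension of that calculation using the brackets of Lemma \ref{gcoords}. Your wrap-around evaluations $\{f_0,f_{d-2}\}_1=g_1g_{d-2}+\nu g_{d-2}/g_0$ and $\{f_1,f_{d-2}\}_1=\nu g_2g_{d-2}/g_0^2$, together with the telescoping identity $f_1f_3\cdots f_{d-3}/(f_2f_4\cdots f_{d-4})=g_1g_{d-2}$, reproduce the stated formulas exactly, and the intertwining argument $\pi\circ\varphi=\varphi'\circ\pi$ for the preservation claim is the one the paper intends.
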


The rest of the argument proceeds as for $d$ odd: the brackets 
$\{\, ,\,\}_1$ and $\{\, ,\,\}_2$  (as given in (\ref{f1}) and (\ref{brf2}) respectively) 
are compatible with each other, so they provide a bi-Hamiltonian structure for 
$\varphi '$ in dimension $d-1$. Letting   ${\tt P}_j '$ for $j=1,2$ denote the corresponding 
Poisson  bivector fields, the analogue of (\ref{caspen}) can then be verified: 
$ ({\tt P}_2 '-\lambda \, {\tt P}_1 ')\, \lrcorner \, \dd {\cal P}(\lambda )=0$. 
It should be noted that the latter identity is well-defined in the $(d-1)$-dimensional 
space with coordinates $f_i$: the trace of the monodromy matrix is invariant 
under the scaling symmetry  (\ref{gscale}), hence all of the first integrals $I_r$ 
can be written as functions of the variables $f_i$. 
In this case, the Lenard-Magri chain begins with  $I_0={\cal C}^*$,  
the Casimir of $\{\, ,\,\}_2$ (as in Remark \ref{2fcas}), and ends with 
a Casimir of  the bracket $\{\, ,\,\}_1$, namely    
$I_{d/2-1}={\cal C}_1 {\cal C}_2-\nu$ 
(which is well-defined in terms of $f_i$). 
Thus it follows that the integrals $I_r$ are in involution 
with respect to both brackets for the  $f_i$, and this result extends to the bracket $\{\, ,\,\}_1$ 
when it is lifted to $d$ dimensions (in terms of $g_i$, or equivalently $v_i$). 

\section{First integrals and integrability of the double pKdV maps}
In this section, we go back to the  $(d-1,-1)$-reduction of the 
double pKdV equation (\ref{double}). 
This reduction yields the difference equation (\ref{E:2copies}), 
and in section 2 we showed how writing this as 
the discrete Euler-Lagrange equation (\ref{E:Eu-LagrEq.})  
led to a symplectic structure for the corresponding $2d$-dimensional map (\ref{E:pKdVmap}).
It is easy to see that the first integrals~\eqref{E:KdVInv} of the KdV map 
also provide first 
integrals for equation~\eqref{E:2copies}
by writing $v_{i}=u_{i-d}-u_{i}$ for $i=0,\ldots ,d-1$. However,  the total number of independent 
integrals for (\ref{E:kdvmap})  is only 
$\lfloor d/2\rfloor +1 $, which is not enough for complete
integrability of the $2d$-dimensional symplectic map (\ref{E:pKdVmap}). 
Here we will show how to construct sufficiently many 
additional integrals, leading to a proof of the following result. 

\begin{theore}\label{pkdvli} 
For all $d\geq 3$, the  $2d$-dimensional map  given by (\ref{E:pKdVmap}) with (\ref{E:2copies}) is 
completely integrable in the Liouville sense.  
\end{theore}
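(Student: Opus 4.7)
The plan is to combine the integrals inherited from the $d$-dimensional KdV reduction with $\lceil d/2 \rceil - 1$ additional integrals constructed from a natural $d$-integral, thereby producing the $d$ involutive independent first integrals required by the symplectic Liouville-Arnold theorem \cite{maeda, Bruschi1991, veselov}. Under the substitution $v_i = u_{i-d} - u_i$, $i = 0, \ldots, d-1$, the $v_i$ generate a Poisson subalgebra of the nondegenerate $2d$-dimensional bracket of Theorem \ref{ubracket} whose induced structure coincides with $\{\,\cdot\,,\,\cdot\,\}_1$ of Theorem \ref{vbracket1}. Consequently, pulling back the KdV first integrals $I_0, \ldots, I_{\lfloor d/2 \rfloor}$ given by (\ref{E:KdVInv}) yields $\lfloor d/2 \rfloor + 1$ integrals of the pKdV map in mutual involution under the ambient bracket, directly from Proposition \ref{T:invo}.

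For the remaining integrals I would use the reduction of the quantity $\mathrm{J}$ appearing in (\ref{doublej}),
\[
J_n := u_{n+1} - u_{n+d-1} - \frac{1}{u_n - u_{n+d}},
\]
whose defining identity $\widehat{\widetilde{\mathrm{J}}} = \mathrm{J}$ becomes $J_{n+d} = J_n$ on orbits; equivalently, $\varphi^*J_n = J_{n+1}$ with indices read modulo $d$, so $J_n$ is a $d$-integral in the sense of \cite{haggar}. The elementary symmetric functions $H_k := e_k(J_0, \ldots, J_{d-1})$, $k=1,\ldots,d$, are therefore automatically invariant under the pKdV map. Using the decomposition $u = U + a$ of the footnote to (\ref{doublej}), with $U$ a pKdV solution and $\widehat{\widetilde{a}} = a$ (so $a_{n+d} = a_n$ after reduction), a short computation gives $J_n = a_{n+1} - a_{n+d-1}$, from which $H_1 \equiv 0$ telescopes out; the remaining $H_2, H_3, \ldots$ supply the needed candidates.

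The main obstacle is involution. The crucial structural claim to establish is that
\[
\{v_i, J_n\} = 0 \qquad \text{and} \qquad \{J_m, J_n\} = 0 \qquad \text{for all } i,m,n
\]
under the bracket of Theorem \ref{ubracket}. Both are natural from the $U + a$ splitting, since $v_i$ depends only on $U$ and $J_n$ only on $a$, but the bracket is defined on the $u$-coordinates and this splitting is not canonical, so these identities have to be verified by a direct calculation from the explicit formulas (\ref{E:Poi1}), presumably by induction on the index gap and expressing everything in the continued-fraction form used in the proof of Theorem \ref{ubracket}. Granted these vanishings, every $H_k$ Poisson-commutes with every $I_r$ and with every other $H_l$.

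Functional independence between the two families is clear because the $I_r$ are polynomials in the $v_i$-subalgebra alone, whereas the $H_k$ genuinely involve the periodic $a$-directions, which are transverse to the $v$-subalgebra. Selecting $\lceil d/2 \rceil - 1$ of the $H_k$ with $k \geq 2$ that are functionally independent (which can be argued generically by counting free parameters in the $a$-periodic component, since the $J_n$'s are linearly independent functions of the $d$-periodic $a$-data modulo the single relation $H_1 = 0$) then produces $d$ involutive independent integrals on the $2d$-dimensional symplectic manifold, completing the proof of Theorem \ref{pkdvli}.
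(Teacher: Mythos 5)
Your overall strategy (pull back the KdV integrals via $v_i=u_{i-d}-u_i$, then supplement them with cyclically invariant functions of the $d$-integral $J_n$) is exactly the paper's, but the execution contains a genuine gap at the crucial involution step. The claim that $\{J_m,J_n\}=0$ for all $m,n$ is false. A direct computation with the bracket of Theorem \ref{ubracket} (this is Lemma \ref{L:PoiJ} in the paper) gives
$$
\{J_i,J_j\}=\left\{\begin{array}{rl} 1, & j-i=1,\\ -1, & j-i=d-1,\\ 0, & \mbox{otherwise},\end{array}\right.
$$
so the $J_i$ generate a nontrivial (circulant) Poisson algebra. Consequently the elementary symmetric functions $H_k=e_k(J_0,\ldots,J_{d-1})$ are \emph{not} all in involution: already for $d=3$ one finds $\{e_2,e_3\}=J_2^2(J_0-J_1)-J_1^2(J_0-J_2)+J_0^2(J_1-J_2)\neq 0$ generically, which is why the paper notes that $S_0=J_0J_1J_2$ fails to commute with $T_0$ and $T_1$. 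The heuristic you offer --- that $v$ depends only on $U$ and $J$ only on $a$ in the splitting $u=U+a$, so the brackets should vanish --- correctly predicts $\{v_i,J_n\}=0$ (Lemma \ref{L:PoiJandV}), but it does not survive the direct calculation for $\{J_m,J_n\}$, precisely because the splitting is not canonical with respect to the Ostrogradsky bracket.

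The paper's repair is to take, instead of all symmetric functions, only the quadratic cyclically symmetric combinations $T_s=\sum_{i=0}^{d-1}J_iJ_{i+s}$ (indices mod $d$). These commute pairwise despite $\{J_i,J_j\}\neq 0$, because $\partial T_r/\partial J_i=J_{i+r}+J_{i-r}$ makes the sum $\sum_i(\partial_iT_r\,\partial_{i+1}T_s-\partial_{i+1}T_r\,\partial_iT_s)$ telescope to zero against the circulant bracket. One then has $\lfloor d/2\rfloor+1$ integrals $T_s$ and $\lfloor d/2\rfloor+1$ (odd $d$) or $d/2+1$ (even $d$) integrals from the KdV side, linked by one or two relations such as $I_{(d-1)/2}^2=\sum_sT_s$, which brings the count to exactly $d$. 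A second, smaller error in your write-up feeds into this: $H_1=\sum_nJ_n$ is not identically zero; it equals the Casimir $\mathcal{C}$ (respectively $\mathcal{C}_1+\mathcal{C}_2$ for even $d$), which is generically nonzero --- the strictly $d$-periodic solution $a_n$ of $a_{n+1}-a_{n-1}=J_n$ that your telescoping argument presupposes exists only on the level set $\sum_nJ_n=0$. So the relations between the two families of integrals must be tracked explicitly rather than assumed away, and the additional commuting integrals must be the $T_s$, not the $H_k$.
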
  

The proof of the above, given  in subsection 5.1 below,  
relies on the 
observation 
that equation~\eqref{E:2copies} can be rewritten as 
$$ 
u_{n+1}-u_{n+d-1}-\frac{1}{u_n-u_{n+d}}=u_{n-d+1}-u_{n-1}-\frac{1}{u_{n-d}-u_n}, 
$$ 
from which we see that 
\begin{equation}
\label{E:gQRT}
u_{n-d+1}-u_{n-1}-\frac{1}{u_{n-d}-u_{n}}=J_n, \qquad \mathrm{with}\quad J_{n+d}=J_n.  
\end{equation} 
Thus the function 
$J_n$  is a $d$-integral (in the sense of \cite{haggar}) for the 
double pKdV map 
(\ref{E:pKdVmap}): for any $n$ it can be viewed as a function on the $2d$-dimensional phase space 
with coordinates $(u_{-d},\ldots ,u_{d-1})$, and under shifting $n$ it is 
periodic  with period $d$.
This implies that any cyclically symmetric function of 
$J_0,J_{1},\ldots, J_{d-1}$ is a first integral 
for equation~\eqref{E:2copies}. As we shall see, 
for the corresponding map (\ref{E:pKdVmap})
this has the further consequence 
that it is {\it superintegrable}, in the sense 
that it has more than the number of independent integrals 
required for Liouville's theorem.

\begin{remar}\label{s12} 
For the $(d-1,-1)$-reduction, 
the 
observation that (\ref{E:gQRT}) holds 
can be seen as 
a direct consequence 
of the fact that (\ref{double}) can be rewritten as (\ref{doublej}). 
An analogous observation applies to the $(s_1,-s_2)$-reduction 
of (\ref{double}), where $n=s_2\ell +s_1 m$, and in  (\ref{doublej}) one has 
$\mathrm{J}\to J_n$ with $J_{n+s_1+s_2}=J_n$. 
\end{remar} 

\subsection{Construction of integrals in involution} 
 
In order to construct additional integrals, we need to calculate the Poisson brackets between the $J_i$, as well as their brackets with 
the $v_j$; together these generate a Poisson subalgebra for the bracket in Theorem \ref{ubracket}, as 
described by the next two lemmata. 

\begin{lemm}
\label{L:PoiJandV} 
With respect to the nondegenerate Poisson bracket specified by  (\ref{E:Poi1}),  
the quantities $J_i$ defined in (\ref{E:gQRT}) and 
$v_{j}=u_{j-d}-u_j$ Poisson commute, i.e. 
\begin{equation}
\label{E:PoiJandV}
\{J_{i}, v_{j}\}=0, \qquad 0\leq i,j\leq d-1. 
\end{equation}
\end{lemm}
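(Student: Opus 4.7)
The plan is to work in the canonical Darboux coordinates $(q_1,\dots,q_d,p_1,\dots,p_d)$ introduced in Section 2 via the Ostrogradsky transformation, where the bracket of Theorem \ref{ubracket} becomes the standard one $\{p_i,q_j\}=\delta_{ij}$, and to verify the identity of Lemma \ref{L:PoiJandV} by a direct chain-rule computation.

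I would begin by rewriting each $J_i$ in $(q,p)$-coordinates. The second formula for $p_1$ in Section 2 is precisely $p_1=u_{-1}-u_{1-d}+1/v_0$, so $J_0=-p_1$. For $1\le m\le d-2$, the recursive continued-fraction identity $u_{-(d-m)}=q_{m+1}+1/(p_{m+1}+q_{m+1}+u_{m-d+1})$, which is implicit in the list of formulas for $u_{-i-1}$ displayed at the end of Section 2, yields $1/v_m=p_{m+1}+q_{m+1}+u_{m-d+1}$; substituting this into $J_m=u_{m-d+1}-u_{m-1}-1/v_m$ causes the $u_{m-d+1}$ contributions to cancel, leaving the polynomial expression $J_m=-q_m-q_{m+1}-p_{m+1}$. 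Finally, since $u_{-1}=q_d+1/p_d$, the definition of $J_{d-1}$ collapses to $J_{d-1}=q_1-q_{d-1}-p_d$. The remarkable fact that drives the whole argument is that, despite the $v_j$'s being complicated nested continued fractions, every $J_i$ is polynomial in the canonical variables.

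Next I would observe that each $v_j$ is a \emph{tail} of the continued fraction for $u_{-d}$, namely $v_j=1/(a_{j+1}+1/(a_{j+2}+\cdots+1/a_d))$, where $a_1=p_1-q_d+q_2-1/p_d$, $a_k=p_k+q_k+q_{k+1}$ for $2\le k\le d-1$, and $a_d=p_d$. In particular every $v_j$ is a function of $a_1,\dots,a_d$ alone and does not depend on $q_1=a_0$. Reading off the Hamiltonian vector fields from Step 1 gives $X_{J_0}=\partial/\partial q_1$, $X_{J_m}=-\partial/\partial p_m-\partial/\partial p_{m+1}+\partial/\partial q_{m+1}$ for $1\le m\le d-2$, and $X_{J_{d-1}}=\partial/\partial p_1-\partial/\partial p_{d-1}+\partial/\partial q_d$. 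By the chain rule each of these fields annihilates any function of $a_1,\dots,a_d$: for a generic index $2\le m\le d-2$ the cancellation follows from the identities $\partial a_m/\partial p_m=\partial a_{m+1}/\partial p_{m+1}=1$ and $\partial a_m/\partial q_{m+1}=\partial a_{m+1}/\partial q_{m+1}=1$, while for $X_{J_{d-1}}$ the crucial cancellation comes from the opposite signs of $q_d$ in $a_1$ and $a_{d-1}$.

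The main obstacle is the bookkeeping at the boundary indices $m\in\{0,1,d-1\}$, where the continued-fraction entries $a_0,a_1,a_d$ are non-generic. One has to confirm, by a careful case check, that the extra terms $-q_d$ and $-1/p_d$ appearing in $a_1$ interact correctly with $X_{J_{d-1}}$ (they do, because $X_{J_{d-1}}$ has no $\partial/\partial p_d$ component, so the $-1/p_d$ term is irrelevant), and that the $+q_2$ in $a_1$ is matched by the $+q_2$ in $a_2$ so that $X_{J_1}$ also vanishes. Once these boundary checks are carried out, the identity $\{J_i,v_j\}=0$ follows for every $0\le i,j\le d-1$.
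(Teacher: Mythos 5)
Your argument is correct, and it takes a genuinely different route from the paper's. The paper proves the lemma by brute force in the $u$-coordinates: it expands $\{J_0,v_j\}$ using the explicit brackets $\{u_{-d},u_k\}$ from Theorem \ref{ubracket} together with the already-established $v$-brackets (\ref{E:KdVsymp}), checks the boundary cases $j=0,d-1$ separately, and then obtains $\{J_i,v_j\}=0$ for $i>0$ by invoking the fact that the map preserves the bracket and shifting indices. You instead pull everything back to the Darboux coordinates $(q,p)$, where the bracket is canonical, and exploit two structural facts: (i) each $J_i$ is \emph{linear} in $(q,p)$ (I checked $J_0=-p_1$, $J_m=-q_m-q_{m+1}-p_{m+1}$ for $1\le m\le d-2$, and $J_{d-1}=q_1-q_{d-1}-p_d$, all of which follow from the Ostrogradsky formulas for $p_i$ exactly as you describe), so its Hamiltonian vector field has constant coefficients; and (ii) each $v_j$ depends on $(q,p)$ only through the continued-fraction entries $a_1,\dots,a_d$, each of which is annihilated by those vector fields (including the boundary cancellations at $a_1$: the $+q_2$ pairing with $a_2$ for $X_{J_1}$, the $-q_d$ pairing with $a_{d-1}$ for $X_{J_{d-1}}$, and the harmlessness of $-1/p_d$ since no $X_{J_i}$ has a $\partial/\partial p_d$ component). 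Your approach buys a uniform treatment of all $i$ at once without appealing to the Poisson property of the map, and it makes the vanishing look structural rather than accidental; the paper's approach avoids re-deriving the $(q,p)$-expressions of the $J_i$ and reuses machinery (the $u$- and $v$-brackets) already set up for Theorems \ref{ubracket} and \ref{vbracket1}. As a bonus, your linear formulas for the $J_i$ make Lemma \ref{L:PoiJ} immediate as well.
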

\begin{proof} 
For $0< j < d-1$ we have
$$ 
\begin{array}{rcl} 
\{J_{0}, v_{j}\}& = & \{u_{-d+1}-u_{-1}-v_0^{-1},v_j\} = \{ u_{-d+1}-u_{-1},u_{j-d}-u_j\} +v_0^{-2}\{v_0,v_j\} 
\\ 
& = & -\{u_{-d+1},u_j\} + v_0^{-2}\, (-1)^j\prod_{r=0}^j v_r^2 =0, 
\end{array}  
$$ 
where we have used (\ref{E:KdVsymp}) as well as Theorem \ref{ubracket}. 
Similar calculations show that $\{J_0,v_0\}=0=\{J_0,v_{d-1}\}$, 
and, since  the bracket is preserved by the double pKdV map  (\ref{E:pKdVmap}), 
the vanishing of all the other brackets $\{J_{i}, v_{j}\}$ follows by shifting indices. 
\end{proof}
\begin{lemm}
\label{L:PoiJ}
The Poisson brackets between the $d$-integrals $J_n$ are specified by 
\begin{equation}
\label{E:PoiJ}
\{J_{i},J_{j}\}=\left\{
\begin{array}{ll}
1,&\mbox{if}\ j-i=1,\\
-1,&\mbox{if}\ j-i=d-1,\\
0,& \mbox{otherwise},
\end{array}
\right.
\end{equation}
for $0\leq i<j\leq d-1$. 
\end{lemm}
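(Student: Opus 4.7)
The plan is to work in the canonical Ostrogradsky coordinates $(q_1,\ldots,q_d,p_1,\ldots,p_d)$ introduced in Section 2, where the Poisson structure reduces to the standard symplectic bracket with $\{p_i,q_j\}=\delta_{ij}$ and $\{p_i,p_j\}=0=\{q_i,q_j\}$. In these coordinates the $J_i$ turn out to be affine-linear, so the whole computation collapses to a bookkeeping exercise in Kronecker deltas.

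The first step is to rewrite each $J_i$ in these coordinates. From the explicit formulas for $p_1,\ldots,p_d$ listed right after the definition of the Ostrogradsky transformation, short algebraic rearrangements give (with $n=0$)
\[
J_0 \;=\; -p_1,\qquad J_i \;=\; -p_{i+1}-q_i-q_{i+1}\ \ (1\leq i\leq d-2),\qquad J_{d-1} \;=\; q_1-q_{d-1}-p_d.
\]
Indeed, $p_1 = u_{-1}-u_{1-d}+1/(u_{-d}-u_0)$ at once gives $J_0=-p_1$; the formula $p_{i+1} = -u_{i-d+1}-u_i+1/(u_{i-d}-u_i)$ rearranges to $u_{i-d+1}-1/(u_{i-d}-u_i) = -p_{i+1}-q_{i+1}$, so subtracting $q_i=u_{i-1}$ yields the stated form of $J_i$; and $p_d = 1/(u_{-1}-u_{d-1})$ together with $u_0=q_1$, $u_{d-2}=q_{d-1}$ gives $J_{d-1}$.

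The second step is to read off the brackets. For $1\leq i<j\leq d-2$ one obtains
\[
\{J_i,J_j\} \;=\; \{p_{i+1},q_j\}+\{p_{i+1},q_{j+1}\}-\{q_i,p_{j+1}\}-\{q_{i+1},p_{j+1}\}
\;=\; \delta_{i+1,j}-\delta_{i,j+1},
\]
which equals $1$ exactly when $j=i+1$. The boundary $i=0$ gives $\{J_0,J_j\}=\{-p_1,-q_j-q_{j+1}\}=\delta_{1,j}+\delta_{1,j+1}$, equal to $1$ for $j=1$ and zero for $2\leq j\leq d-2$, while $\{J_0,J_{d-1}\}=-\{p_1,q_1\}=-1$. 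Finally, for $1\leq i\leq d-2$ and $j=d-1$ one finds $\{J_i,J_{d-1}\}=\delta_{i,d-2}$, i.e.\ $1$ precisely when $j-i=1$. Each of these outcomes matches (\ref{E:PoiJ}).

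There is no real analytic obstacle since everything reduces to canonical brackets of expressions linear in $(q,p)$. The main point of care is simply that $J_0$ and $J_{d-1}$ have forms slightly different from the uniform middle formula, reflecting the asymmetric roles of $p_1$ and $p_d$ in the Ostrogradsky transformation; consequently the three boundary cases $i=0$, $j=d-1$ and the middle range must be inspected separately. One could in principle compute $\{J_i,J_j\}$ directly from Theorem \ref{ubracket} in the $u$-coordinates, but the quadratic-in-$v$ brackets there would make the bookkeeping far more painful, and Lemma \ref{L:PoiJandV} would only partially simplify them.
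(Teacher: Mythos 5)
Your proof is correct, but it takes a genuinely different route from the paper's. The paper first reduces to the case $i=0$ by invoking the shift-invariance of the bracket under the pKdV map, then uses Lemma \ref{L:PoiJandV} to discard the $v_j^{-1}$ term in $J_j$, and finally evaluates the remaining brackets $\{u_{-d+1},u_{j-1}\}$, $\{u_{-d},u_{j-d+1}\}$, $\{u_{-d},u_{j-1}\}$ using the quadratic expressions of Theorem \ref{ubracket}; the whole computation lives in the $u$-coordinates. You instead pull everything back to the canonical Ostrogradsky coordinates, observe that $J_0=-p_1$, $J_i=-p_{i+1}-q_i-q_{i+1}$ ($1\leq i\leq d-2$) and $J_{d-1}=q_1-q_{d-1}-p_d$ are linear there, and reduce the bracket to Kronecker deltas. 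This is legitimate because the bracket (\ref{E:Poi1}) is by construction the push-forward of the canonical one, and I have checked your three linear expressions against the explicit formulae for $p_1,\ldots,p_d$: they are right, and the delta bookkeeping in all three regimes (middle range, $i=0$, $j=d-1$) reproduces (\ref{E:PoiJ}) exactly. What your approach buys is transparency and self-containedness -- it needs neither Lemma \ref{L:PoiJandV} nor the quadratic $u$-brackets, and it makes the "discrete Volterra-type" structure of (\ref{E:PoiJ}) manifest as a consequence of the near-translation-invariant linear form of the $J_i$; the cost is that you must verify every boundary case by hand, whereas the paper's appeal to the Poisson property of the map disposes of all but the $i=0$ row at once.
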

\begin{proof}
Once again, from the behaviour  under shifting indices, 
it is enough to verify the brackets for $i=0$ and $1\leq j\leq d-1$.
Expanding the left hand side of~\eqref{E:PoiJ}, 
and using  Lemma~\ref{L:PoiJandV}, we obtain
$$ 
\begin{array}{rcl} 
\{J_0,J_j\} & = & \{J_0, u_{j-d+1}-u_{j-1}-v_j^{-1}\} \\ 
&
= & \{u_{-d+1}-u_{-1}-(u_{-d}-u_0)^{-1}, u_{j-d+1}-u_{j-1}\} \\  
& = & 
-\{u_{-d+1},u_{j-1}\} +(u_{-d}-u_0)^{-2}\, \Big( \{u_{-d},u_{j-d+1}\} -\{u_{-d},u_{j-1}\}\Big). 
\end{array} 
$$
Upon substituting with the non-zero 
brackets for the coordinates $u_i$,  as in Theorem  \ref{ubracket}, 
the above expression vanishes for $2\leq j\leq d-2$, and is equal to $\pm 1$ for $j=1,d-1$ respectively.  
\end{proof}
The brackets (\ref{E:PoiJ}) mean that a suitable set of quadratic functions 
of the $J_i$ 
give additional integrals for  the equation (\ref{E:2copies}). 
\begin{propositio}\label{tints} 
The functions 
\beq\label{ts} 
T_s=\sum_{i=0}^{d-1} J_{i}\ J_{i+s} 
\eeq 
provide $\lfloor d/2\rfloor + 1$ independent first integrals for the 
double pKdV map  
(\ref{E:pKdVmap}).  
These integrals are in involution with each other, and 
Poisson commute with the first integrals of the 
KdV map (\ref{E:kdvmap}). 
\end{propositio}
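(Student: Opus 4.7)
The plan is to verify the four claims --- invariance, commutation with the KdV integrals, involution among the $T_s$, and independence --- by exploiting the fact that $J_n$ is periodic in $n$ with period $d$, so that $J_0,\ldots,J_{d-1}$ are well-defined functions on the $2d$-dimensional phase space and the double pKdV map acts on them by the cyclic permutation $J_i\mapsto J_{(i+1)\bmod d}$. Each $T_s=\sum_i J_i J_{i+s}$ is a cyclic sum and therefore a first integral; the reindexing $i\mapsto i-s$ yields the reflection identity $T_s=T_{d-s}$, so the distinct $T_s$ are indexed by $s=0,1,\ldots,\lfloor d/2\rfloor$. Commutation with any first integral of the KdV map from Section 4 is immediate, since each such integral is a function of the $v_j=u_{j-d}-u_j$ alone while $T_s$ is polynomial in the $J_i$, and Lemma~\ref{L:PoiJandV} supplies $\{J_i,v_j\}=0$; Leibniz then extends this vanishing to all polynomial expressions.

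For involution, I would expand via the Leibniz rule
\[
\{T_s,T_t\}=\sum_{i,j}\big(\{J_i,J_j\}J_{i+s}J_{j+t}+\{J_i,J_{j+t}\}J_{i+s}J_j+\{J_{i+s},J_j\}J_iJ_{j+t}+\{J_{i+s},J_{j+t}\}J_iJ_j\big),
\]
and substitute $\{J_a,J_b\}=\delta_{b,a+1}-\delta_{b,a-1}$ (indices mod $d$), which is Lemma~\ref{L:PoiJ} combined with skew-symmetry and cyclic periodicity. After reindexing the summations, the second and third sums simplify to $\pm(T_{s+t+1}-T_{s+t-1})$ and cancel directly; the first and fourth sums become $T_{t-s+1}-T_{t-s-1}$ and $T_{s-t+1}-T_{s-t-1}$ respectively, and cancel upon applying $T_m=T_{-m}$. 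Hence $\{T_s,T_t\}=0$.

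For independence, first observe that $J_0,\ldots,J_{d-1}$ are themselves functionally independent on phase space: their gradients in the $u$-coordinates have banded, essentially disjoint supports, yielding full rank $d$ at a generic point. It then remains to show that $T_0,T_1,\ldots,T_{\lfloor d/2\rfloor}$ are algebraically independent as polynomials in the abstract variables $J_i$, via the Jacobian entries $\partial T_s/\partial J_i=J_{i+s}+J_{i-s}$ for $s\geq 1$ and $\partial T_0/\partial J_i=2J_i$; evaluation at a point where the $J_i$ are chosen as distinct powers of a generic parameter exhibits a non-singular $(\lfloor d/2\rfloor+1)\times(\lfloor d/2\rfloor+1)$ submatrix. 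The main obstacle is the bookkeeping in the involution step: once the four Leibniz terms are set up and the mod-$d$ shifts tracked carefully, the cancellations fall out of the reflection symmetry $T_m=T_{-m}$, but executing the four sums cleanly is the part that requires the most attention.
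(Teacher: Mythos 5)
Your proposal is correct and follows essentially the same route as the paper: cyclic symmetry of the $T_s$ for invariance, the reflection identity $T_{d-s}=T_s$ for the count, Lemma~\ref{L:PoiJandV} plus Leibniz for commutation with the KdV integrals, and Lemma~\ref{L:PoiJ} for involution (your four-term Leibniz expansion is just the paper's chain-rule computation with $\partial T_r/\partial J_i=J_{i+r}+J_{i-r}$ written out, and the cancellations you identify are the right ones). The only difference is that you flesh out the functional-independence step slightly more than the paper does.
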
 
\begin{proof}  
With indices read mod $d$, 
the quantities $T_s$ are cyclically symmetric functions of the $J_i$; in other words they are 
invariant under the cyclic permutation $(J_0,\ldots,J_{d-1})\mapsto (J_{1},\ldots,J_{d-1},J_0)$, 
which means that they are first integrals for (\ref{E:pKdVmap}).  
From the periodicity of the $d$-integrals $J_i$ 
it is clear that $T_{s+d}=T_s$, and also 
$T_{d-s}=T_s$. Taking $s=0,\ldots ,\lfloor d/2\rfloor$ 
yields $\lfloor d/2\rfloor + 1$ independent functions of $J_0,\ldots , J_{d-1}$, and since the $J_i$ are themselves 
independent functions of $u_{-d},\ldots ,u_{d-1}$, this implies functional independence of this number of the quantities (\ref{ts}).  
The fact that these quantities Poisson commute with each other 
is a consequence of Lemma \ref{L:PoiJ}, and the fact that 
$
\frac{\partial T_r}{\partial J_{i}}=J_{i+r}+J_{i-r} 
$, 
which implies that 
\[
\{T_r,T_s\}=\sum_{i=0}^{d-1}\left(\frac{\partial T_r}{\partial J_{i}}\frac{\partial T_s}{\partial J_{i+1}}
-\frac{\partial T_r}{\partial J_{i+1}}\frac{\partial T_s}{\partial J_{i}}\right) =0,
\]
as required.  Also, by Lemma \ref{L:PoiJandV}, each $T_s$ Poisson commutes 
with any function of the $v_i$, so with the 
first integrals of (\ref{E:kdvmap})  in particular.  
\end{proof}

Before describing the general case, we now demonstrate 
complete integrability for the simplest examples. 
\begin{example}\label{6d} 
{\em 
Starting with $d=3$, 
equation~\eqref{E:2copies} yields the $6$-dimensional  symplectic map
\beq 
\label{E:N=2}
\left(u_{-3},u_{-2},\ldots, u_{2}\right)\mapsto \Big(u_{-2},\ldots, u_{2}, F\left(u_{-3},u_{-2},\ldots, u_{2}\right)\Big),
\eeq
where
$ 
F\left(u_{-3},u_{-2},\ldots, u_{2}\right) 
=u_0-\Big( u_{1}+u_{-1}-u_{2}-u_{-2}+1/(u_{-3}-u_0)\Big)^{-1} 
$. 
Two integrals of  this map
are given in terms of $v_0 = u_{-3}-u_0$, $v_1 = u_{-2}-u_1$, $v_2 = u_{-1}-u_2$  by
$$ 
I_0 \equiv  {\cal C}^*=-v_{1}+\frac{1}{v_0}+\frac{1}{v_{2}}-\frac{1}{v_0v_{1}v_{2}}, \qquad 
I_1 \equiv  {\cal C}=v_1 -\frac{1}{v_0}-\frac{1}{v_{1}}-\frac{1}{v_{2}}  
$$ 
Apart from these, there is the pair of integrals 
$T_0 = J_0^2 +J_1^2 +J_2^2$, 
$T_1 = J_0J_1+J_1J_2+J_2J_0$, 
which are written as symmetric functions of the $3$-integrals %
$J_{i}=u_{i-2}-u_{i-1}- 
(u_{i-3}-u_{i})^{-1}$, 
$i=0,1,2$.
The quantities $I_0,I_1,T_0,T_1$ Poisson commute with each other, but they cannot 
all be independent. Indeed, the quantities $v_0,v_1,v_2,J_0,J_1,J_2$ are not themselves 
independent functions of $u_i$: they are connected by the relation 
$$ 
I_1 =  v_1 -\frac{1}{v_0}-\frac{1}{v_{1}}-\frac{1}{v_{2}} =  J_0+J_{1}+J_{2}, 
$$ 
which implies that the first integrals satisfy 
$ I_1^2 =T_0 + 2T_1$.    
Subject to the latter relation, the Liouville integrability of the 
map (\ref{E:N=2}) 
follows by taking any three independent first integrals in involution ($I_0,I_1,T_0$, for instance). 
The existence of an additional independent first integral, namely 
$S_0=J_0J_{1}J_{2}$ 
(another symmetric function of $J_0,J_{1},J_{2}$), means that the map is superintegrable; 
but this integral does not Poisson commute with $T_0$ or $T_1$.
} 
\end{example}
\begin{example}\label{8d}
{\em 
For 
$d=4$,  equation~\eqref{E:2copies} gives the 
8-dimensional map 
\beq 
\label{E:N=3}
\left(u_{-4},u_{-3},\ldots, u_{3}\right)\mapsto \Big(u_{-3},\ldots, u_{3}, G\left(u_{-4},u_{-3},\ldots, u_{3}\right)\Big),
\eeq 
where
$
G\left(u_{-4},u_{-3},\ldots, u_{3}\right)=u_0-\Big(u_{1}+u_{-1}-u_{3}-u_{-3}+1/(u_{-4}-u_0)\Big)^{-1} 
$. 
This map is symplectic with respect to the nondegenerate Poisson structure given in Theorem~\ref{ubracket}.
There are three independent integrals  that come from the KdV map, 
given by 
$$ 
I_0=\left(\frac{1}{v_0v_1}-1\right)\, \left(v_1-\frac{1}{v_2}\right)\, \left(v_2-\frac{1}{v_3}\right), \qquad 
I_1 = {\cal C}_1 {\cal C}_2 - 1, \qquad 
{\cal K} =  {\cal C}_1 +{\cal C}_2, 
$$ 
which are all 
written in terms of $v_{i}=u_{i-4}-u_{i}$ for $i=0,1,2,3$, using 
$$ 
{\cal C}_1 = v_1-\frac{1}{v_0}-\frac{1}{v_2}, \qquad {\cal C}_2 = v_2-\frac{1}{v_1}-\frac{1}{v_3} . 
$$ 
The first two of these integrals ($I_0$ and $I_1$) are coefficients of  
the spectral curve in Example \ref{d4spec} (with $\nu =1$), 
while the third is not. 
As well as these, there are three independent cyclically symmetric quadratic 
functions of the 4-integrals 
$J_{i}=u_{i-3}-u_{i-1}- (u_{i-4}-u_{i})^{-1}$,  
$i=0,1,2,3$,  
namely 
$$ 
T_0 = J_0^2 +J_1^2 +J_2^2, \quad T_1 = J_0J_1+J_1J_2+J_2J_3+J_3J_0,  
\quad T_2 = 2(J_0J_2+J_1J_3),  
$$ 
which are also first integrals of (\ref{E:N=3}). There are two relations between the quantities 
$v_0,\ldots , v_3$ and $J_0,\ldots ,J_3$, as can seen by noting the identities  
$
{\cal C}_1 = J_0 + J_2
$, 
$
{\cal C}_2 = J_1 + J_3
$; 
thus the aforementioned first integrals are related by 
$ 
{\cal K}^2 =T_0+2T_1 +T_2
$, 
$
I_1+1 = T_1
$.  
Hence complete integrability of the map (\ref{E:N=3}) follows from 
the existence of 4 independent integrals in involution, i.e. $I_0,I_1,{\cal K}, T_0$. 
The map is also superintegrable, due to the presence of a fifth independent 
first integral, given by another symmetric function of the $J_i$, that is   
$
S_0=J_0J_1J_2J_3
$. 
}
\end{example} 
As we now briefly explain, 
the general case follows the pattern of one of the preceding two examples 
very closely, according to whether $d$ is odd or even. 

When  $d$ is odd, the spectral curve (\ref{spec}) has $(d+1)/2$ non-trivial 
coefficients, which are the quantities $I_r$ appearing in (\ref{trace}).  
There are also $(d+1)/2$ independent functions $T_s$, as in (\ref{ts}), but the identity 
$I_{(d-1)/2}= J_0+J_1+\ldots +J_{d-1}$ implies that these two sets of functions are  
related by 
$I_{(d-1)/2}^2 =\sum_{s=0}^{d-1}T_s$. 
Hence there are precisely $d$ independent functions in involution, as required for Theorem \ref{pkdvli}. 

In the case that $d$ is even, the non-trivial coefficients $I_r$ of the spectral curve 
are supplemented by the additional integral ${\cal K}={\cal C}_1+{\cal C}_2$, 
providing a total of  
$(d+2)/2$ independent functions, and there are the same number of independent 
functions of the form (\ref{ts}), but now the pair of identities 
$$
{\cal C}_1=\sum_{i\, \mathrm{even}, \, 0\leq i\leq d-2}J_i, \qquad  
{\cal C}_2=\sum_{i\, \mathrm{odd}, \, 1\leq i\leq d-1}J_i  
$$ 
together imply  that the first integrals satisfy the two relations 
${\cal K}^2 =\sum_{s=0}^{d-1}T_s$, 
$I_{(d-2)/2}+1=\sum_{s\,\mathrm{odd}, \, 1\leq s\leq d/2}T_s$, 
so once again there are $d$ independent integrals, as required. 

One can also construct extra first integrals $S_j$ for $j=0,\ldots ,\lfloor (d-1)/2\rfloor-1$, 
by taking additional  independent cyclically symmetric functions of the $J_i$. 
This means that the map (\ref{E:pKdVmap}) is superintegrable for all $d$. 

\subsection{Difference equations with periodic coefficients} 

To conclude this section, we look at equation~\eqref{E:gQRT} in a 
different way, and show how it is related to other difference equations 
with periodic coefficients. We begin by revisiting the previous two examples.   

\begin{example}\label{qrtp3} 
{\em 
For the map (\ref{E:N=2}), the introduction of the variables 
$x_{n}=u_{n}-u_{n+1}$ yields the following difference equation 
of second order: 
 \begin{equation}
 \label{E:gQRTN=2}
x_{n+2}+x_n=\frac{1}{x_{n+1}-J_n}-x_{n+1}, \qquad J_{n+3}=J_n. 
 \end{equation}
Iteration of the latter preserves the canonical symplectic 
form $\dd x_0\wedge \dd x_1$ in the $(x_0,x_1)$ plane. 
Apart from the coefficient $J_n$, 
there is another periodic quantity associated with this equation, 
namely the 3-integral $H_n$ which equals 
$$\begin{array}{ll} 
x_n^2x_{n+1}+x_{n+1}^2x_n 
&
+(J_{n+2}J_{n}-J_{n+1}J_{n}-1)x_{n} 
+ 
(J_{n+2}J_{n}-J_{n+1}J_{n+2}-1)x_{n+1} \\ 
& 
-J_{n}x_{n}^2-J_{n+2}x_{n+1}^2
+(J_{n+1}-J_{n}-J_{n+2})x_{n}x_{n+1}  
\end{array} 
$$ 
(with $H_{n+3}=H_n$). 
This is related to some of the first integrals for 
$d=3$ by the identity  
$H_n-J_{n+1}=I_0-S_0$,  
where $S_0=J_0J_{1}J_{2}$ as in Example \ref{6d}. 
} 
\end{example} 

\begin{example}\label{qrtp4} 
{\em 
By introducing $w_{n}=u_{n}-u_{n+2}$ in (\ref{E:N=3}), 
we obtain the second order
equation
\begin{equation}
\label{E:moreN=3}
w_{n+2}+w_n=\frac{1}{ w_{n+1}-J_n}, \qquad J_{n+4}=J_n.
\end{equation}
Each iteration of this difference equation 
preserves the canonical symplectic 
structure $\dd w_0\wedge\dd w_1$. 
Aside from the coefficient $J_n$, the equation (\ref{E:moreN=3}) 
has a 4-integral, i.e. $H_{n+4}=H_n$ where  $H_n$ is given by  
$$
\begin{array}{l} 
w_n^{2}w_{n+1}^{2} + \left( J_{{n+2}}-J_{{n}} \right) w_{n}^{2}w_{{n+1}}+ \left( J_{{n+1}}-J_{{n+3}} \right) w_{{n}}w_{n+1}^{2} \\ 
-J_{{n}}J_{{n+2}}w_n^{2}-J_{{n+1}}J_{{n+3}}w_{n+1}^{2}
+ \Big(  \left( J_{{n+3}}-J_{{n+1
}} \right)  \left( J_{{n}}-J_{{n+2}} \right) -1 \Big) w_{{n}}w_{{n+1
}} \\  
+\left(J_{{n}} J_{{n+1}}J_{{n+3}} -J_{{n+1}}J_{{n+2}}J_{{n+3}}-J_{{n+1}} 
\right) w_{{n+1}} 
\\ 
+ (
J_{{n}}J_{{n+2}}J_{{n+3}}-J_{{n}}J_{{n+1}}J_{{n+2}}-J_{{n+2}}) w_{{n}} 
.
\end{array} 
$$ 
One can check that $H_{n+1}-H_n=J_{n+2}(J_{n+3}-J_{n+1})$, and 
$H_n$ is related to the first integrals in Example \ref{8d} by the formula 
$
H_n-J_{n+1}J_{n+2}=I_0-S_0
$. 
} 
\end{example} 
\begin{remar} 
Both equations  (\ref{E:gQRTN=2}) and (\ref{E:moreN=3}) are of the type considered recently by Roberts 
\cite{nonqrt}: their orbits move periodically through a sequence of biquadratic curves, 
defined by the quantities $H_n$, and they reduce to symmetric QRT maps when $J_n=$constant.  
\end{remar}

In general, equation~\eqref{E:gQRT} can be rewritten in terms of the variables 
$x_{n}=u_{n-d}-u_{n-d+1}$, to obtain a difference equation of order $d-1$, that is 
\begin{equation}
\label{E:HigherQRT}
x_{n}+x_{n+1}+\ldots+x_{n+d-1}=\frac{1}{x_{n+1}+x_{n+2}+\ldots+x_{n+d-2}-J_n},  
\end{equation}
where $J_n$ is periodic with period $d$.
Note that both sides of the above equation are equal to $v_n$, the discrete KdV variable, 
as in (\ref{E:KdV_Reduction}).
This equation can be seen as a higher order analogue of the McMillan map, with periodic coefficients.
First integrals for equation~\eqref{E:HigherQRT} can be obtained from the integrals $I_r$ 
of the KdV map (\ref{E:kdvmap}) 
by rewriting all the variables $v_i$ (or $g_i$) in terms of $x_n$ and $J_n$.   
To be precise, one can verify that $g_{i}=x_{i-1}+x_{i}+J_{i}$ for $1\leq i\leq d-1$, 
and $v_0=-g_0^{-1}$ is given by the right hand side of~\eqref{E:HigherQRT} 
for $n=0$. In particular,  the explicit formula for $I_0$ is 
$$ 
I_0=\prod_{i=1}^{d-2}(x_{i-1}+x_{i}+J_{i})\left(\left(\sum_{i=1}^{d-2}x_i-J_0\right)
\left(\sum_{i=0}^{d-3}x_i-J_{d-1}\right)-1\right).
$$ 

When $d$ is even, one can reduce~\eqref{E:HigherQRT}  to a difference equation of order $d-2$, by setting 
$w_{n}=x_{n}+x_{n+1}$. Formulae for first integrals in terms of the variables $w_n$ follow directly from 
integrals given in terms of  $x_n$ and $J_n$.

\section{Conclusions} 

We have demonstrated that the 
$(d-1,-1)$-reduction of the discrete KdV equation 
(\ref{E:kdv}) is a completely integrable map in the Liouville 
sense. There are two different Poisson structures for 
this map: one was obtained by starting from the related 
double pKdV equation 
(\ref{double}) and its associated Lagrangian; the other 
arose by using tau-functions and a connection with cluster algebras. 
The appropriate reduction of the Lax pair (\ref{E:Laxpairs}) for discrete KdV, 
via the staircase method, was the key to finding explicit expressions 
for first integrals, and two ways were presented to prove that these 
are in involution. The corresponding reduction of the lattice equation 
(\ref{double}) was also seen to be completely integrable (and even 
superintegrable), with additional first integrals appearing 
due to the presence of the $d$-integral $J_n$.


An interesting feature of all these reductions is that, although they 
are autonomous difference equations, they have various  
difference equations with periodic coefficients associated with them, such as (\ref{dodd}), 
(\ref{deven}) and (\ref{E:HigherQRT}). 

There are several ways in which the results in this paper could be developed  
further. In particular, it would be interesting to understand the 
Poisson brackets for the reductions in terms of an appropriate r-matrix. 
It would also be instructive to make use of the bi-Hamiltonian structure to 
perform separation of variables, by the method in \cite{blaszaksov}, 
and this could be further used to obtain the exact solution of these 
difference equations in terms of theta functions associated with the 
hyperelliptic spectral curve (\ref{spec}). 
Finally, it would be worthwhile to extend the results here 
to $(s_1, s_2)$-reductions of discrete KdV, and to reductions of other integrable lattice equations 
that have not been considered before.

\begin{figure}\centering
\scalebox{0.5}[0.5]{
\includegraphics[angle=270]{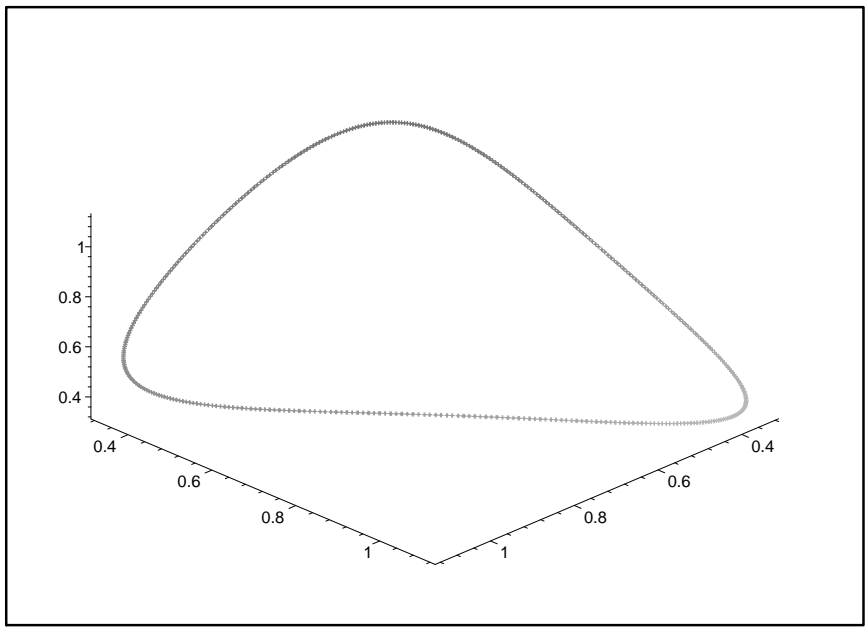}}
\caption{\small{The orbit of the map (3.2) for $d=3$ with initial data $(v_0,v_1,v_2)=\left(1,\frac{1}{2},\frac{2}{5}\right)$. }}
\label{pentangle} 
\end{figure}

\section*{Appendix} 

\subsection{Comment on the solution of the double pKdV equation} 

The double pKdV equation (1.3), which is equivalent to   
$$ 
\widehat{\widetilde{\widetilde{u}}}-\widehat{\widehat{\widetilde{u}}}- 
\left(\widehat{\widetilde{u}}-  \widehat{\widehat{\widetilde{\widetilde{u}}}} \right)^{-1} = 
\t{u} -\h{u}-\Big(u- \h{\t{u}}\Big)^{-1},  
$$ 
is more general than the pKdV equation (1.1), that is 
$$
(u-\widehat{\widetilde{u}})(\widetilde{u}-\widehat{u})=1. 
$$ 
Here we explain how it is possible to obtain any solution of (1.3) 
from a solution of pKdV together with the solution of a linear equation. 

To see this, let $a$ be a solution of the linear partial difference equation 
$$ 
\widehat{\widetilde{a}}=a.  
$$ 
(Note that, from (2.4),  this is the same as the linear equation satisfied by the quantity J.) 
Then it can be verified by direct calculation that  the combination 
$$ 
u=U+a 
$$ satisfies (1.3), where $U$ is any solution of (1.1). 
Conversely, let $u$ be any solution of (1.3), and define the quantity J in terms of $u$ 
according to the right hand formula in (2.4). Now let $a$ be a solution of the pair of linear equations 
$$ 
\t{a} -\underset{\sim}{a}=\mathrm{J}, \qquad   \widehat{\widetilde{a}}=a.
$$ 
It follows from (1.3) that the latter two equations are compatible, and the 
combination 
$$ 
U=u-a
$$ satisfies the pKdV equation (1.1). Hence any solution of (1.3) can be written in the form 
$u=U+a$. 

At the level of the $(d-1,-1)$-reduction of (1.3), this means that all solutions 
of the ordinary difference equation (2.7) can be written in the form 
$$u_n=U_n+a_n,$$ 
where $a_n$ is periodic with period $d$, as well 
as being a solution of the second order linear equation 
$$ 
a_{n+1} - a_{n-1} = J_n, 
$$ 
and $U_n$ is a 
solution of the  $(d-1,-1)$-reduction of (1.1) (as studied in reference [34]). 
It is not clear how this connection between the two reduced equations 
should be interpreted from the point of view of Liouville integrability.

\subsection{An example of an orbit for $d=3$} 

Consider the choice of initial values $\tau_0= \tau_1= \tau_2= \tau_3= 1$ for the Somos-4 recurrence 
with 2-periodic coefficients, as in Example 9, that is 
$$ 
\tau_{n+4}\, \tau_n =\alpha_n \, \tau_{n+3}\, \tau_{n+1} +  \tau_{n+2}^2, \qquad 
\mathrm{with}  
\qquad 
\alpha_{n+2}=\alpha_n, 
$$
together with the choice of parameters $\alpha_0=1$, $\alpha_1=2$.  
This corresponds to taking the initial values 
$v_0=1$, $v_1=1/2$, $v_2=2/5$ in the map (3.2) with $d=3$, which is 
given by 
$$ 
\varphi: \quad   
(v_0,v_{1},v_{2})\mapsto\left(v_{1},v_{2},\frac{v_0}{1+v_{2}v_0-v_{1}v_0}\right).
$$
In Figure 1 
we plot the orbit of the map $\varphi$ that is generated by this initial data.  


\section*{Acknowledgment} 

This work was supported by the Australian Research Council. 
DTT visited the University of Kent in 2011 and 2012, 
and is grateful  for the support of  an 
Edgar Smith Scholarship which 
funded her travel. %
ANWH would like to thank the organisers of the Nonlinear Dynamical Systems workshop for 
supporting his trip to La Trobe University,  
Melbourne in September-October 2012.

\end{document}